\documentclass[a4paper,UKenglish]{lipics-v2018}
%This is a template for producing LIPIcs articles. 
%See lipics-manual.pdf for further information.
%for A4 paper format use option "a4paper", for US-letter use option "letterpaper"
%for british hyphenation rules use option "UKenglish", for american hyphenation rules use option "USenglish"
% for section-numbered lemmas etc., use "numberwithinsect"

\usepackage{microtype}%if unwanted, comment out or use option "draft"

%\graphicspath{{./graphics/}}%helpful if your graphic files are in another directory

\bibliographystyle{plainurl}% the recommnded bibstyle

\title{Multistage Knapsack}

\titlerunning{Multistage Knapsack}%optional, please use if title is longer than one line

\author{Evripidis Bampis}{Sorbonne Universit\'{e}, CNRS, LIP6, France}{evripidis.bampis@lip6.fr}{}{}%mandatory, please use full name; only 1 author per \author macro; first two parameters are mandatory, other parameters can be empty.

\author{Bruno Escoffier}{Sorbonne Universit\'{e}, CNRS, LIP6, France}{bruno.escoffier@lip6.fr}{}{}

\author{Alexandre Teiller}{Sorbonne Universit\'{e}, CNRS, LIP6, France}{alexandre.teiller@lip6.fr}{}{}

\authorrunning{E. Bampis, B. Escoffier and A. Teiller}%mandatory. First: Use abbreviated first/middle names. Second (only in severe cases): Use first author plus 'et al.'

\Copyright{CC-BY}%mandatory, please use full first names. LIPIcs license is "CC-BY";  http://creativecommons.org/licenses/by/3.0/

\subjclass{\ccsdesc[500]{Theory of Computation $\rightarrow$ Design and Analysis of Algorithms $\rightarrow$ Approximation Algorithms Analysis}}% mandatory: Please choose ACM 2012 classifications from https://www.acm.org/publications/class-2012 or https://dl.acm.org/ccs/ccs_flat.cfm . E.g., cite as "General and reference $\rightarrow$ General literature" or \ccsdesc[100]{General and reference~General literature}. 

\keywords{Knapsack; Approximation Algorithms; Multistage Optimization}%mandatory

\category{}%optional, e.g. invited paper

\relatedversion{}%optional, e.g. full version hosted on arXiv, HAL, or other respository/website

\supplement{}%optional, e.g. related research data, source code, ... hosted on a repository like zenodo, figshare, GitHub, ...

\funding{}%optional, to capture a funding statement, which applies to all authors. Please enter author specific funding statements as fifth argument of the \author macro.

\acknowledgements{}%optional

%Editor-only macros:: begin (do not touch as author)%%%%%%%%%%%%%%%%%%%%%%%%%%%%%%%%%%
\EventEditors{}
%\EventNoEds{}
\EventLongTitle{}
\EventShortTitle{}
\EventAcronym{}
\EventYear{}
\EventDate{}
\EventLocation{}
\EventLogo{}
\SeriesVolume{}
\ArticleNo{}
%\nolinenumbers %uncomment to disable line numbering
\hideLIPIcs  %uncomment to remove references to LIPIcs series (logo, DOI, ...), e.g. when preparing a pre-final version to be uploaded to arXiv or another public repository
%%%%%%%%%%%%%%%%%%%%%%%%%%%%%%%%%%%%%%%%%%%%%%%%%%%%%%

\nolinenumbers

\newtheorem{defn}{Definition}
\newtheorem{prop}{Proposition}

\begin{document}

\maketitle

\begin{abstract}
Many systems have to be maintained while the underlying constraints, costs and/or profits change over time. Although the state of a system may evolve during time, a non-negligible transition cost is incured for transitioning from one state to another. In order to model such situations, Gupta et al. (ICALP 2014) and Eisenstat et al. (ICALP 2014) introduced a \emph{multistage} model  where the input is a sequence of instances (one for each time step), and the goal is to find a sequence of solutions (one for each time step) that are both (i) near optimal  for each time step and (ii) as stable as possible. 
We focus on the \emph{multistage} version of the {\sc Knapsack} problem where we are 
given a time horizon $t=1,2,\ldots,T$, and a sequence of knapsack instances $I_1,I_2,\ldots,I_T$, one for each time step, defined on a set of $n$ objects. In every time step $t$ we have to choose a feasible knapsack $S_t$ of $I_t$, which gives a \emph{knapsack profit}. To measure the stability/similarity of two  consecutive solutions  $S_t$ and $S_{t+1}$, we identify the objects for which the decision, to be picked or not, remains the same in $S_t$ and $S_{t+1}$, giving a \emph{transition profit}.
We are asked to produce a sequence of solutions $S_1,S_2,\ldots,S_T$  so that the total knapsack profit plus the overall transition profit is maximized.

We propose a PTAS for the {\sc Multistage  Knapsack} problem. This is the first approximation scheme for a combinatorial optimization problem in the considered multistage setting, and its existence contrasts with the inapproximability results for other combinatorial optimization problems that are even polynomial-time solvable in the static case (e.g.{\sc multistage  Spanning Tree}, or {\sc multistage Bipartite Perfect Matching}). Then, we prove that there is no FPTAS for the problem even in the case where $T=2$, unless $P=NP$. Furthermore, we give a pseudopolynomial time algorithm for the case where the number of steps is bounded by a fixed constant and we show that otherwise the problem remains NP-hard even in the case where all the weights, profits and capacities are 0 or 1. 
 \end{abstract}

%\newpage

\section{Introduction}
In a classical combinatorial optimization problem, given an instance of the problem we seek a feasible solution optimizing the objective function. However, in many systems the input may change over the time and the solution has to be adapted to the input changes. It is then necessary to determine a tradeoff between the optimality of the solutions in each time step and the stability/similarity of consecutive solutions. This is important since in many applications there is a significant transition cost for changing (parts of) a solution. 
Recently, Gupta et al. \cite{Gupta} and Eisenstat et al. \cite{Eisenstat} introduced 
 a \emph{multistage} model in order to deal with such situations. They consider that the input is a sequence of instances (one for each time step), and the goal is to find a sequence of solutions (one for each time step) reaching such a tradeoff.

Our work follows the direction proposed by Gupta et al. \cite{Gupta} who suggested the study
of more combinatorial optimization problems in their multistage framework. In this paper,
we focus on the multistage version of the {\sc Knapsack} problem. Consider a company owning
a set $N = \{u_1, \ldots , u_n\}$ of production units. Each unit can be used or not; if $u_i$ is used, it
spends an amount $w_i$ of a given resource (energy, raw material,...), and generates a profit
$p_i$. Given a bound $W$ on the global amount of available resource, the static {\sc Knapsack} problem
aims at determining a feasible solution that specifies the chosen units in order to maximize the total profit under the constraint that the total amount of the resource does not exceed the bound of $W$.
In a multistage setting, considering a time horizon $t = 1, 2, \ldots,T$ of, let us say, $T$ days,
the company needs to decide a production plan for each day of the time horizon, given that data
(such as prices, level of resources,...) usually change over time. This is a typical situation,
for instance, in energy production planning (like electricity production, where units can be
nuclear reactors, wind or water turbines,...), or in data centers (where units are machines and
the resource  corresponds to the available energy). 
Moreover, in these examples, there is an extra cost to turn ON or OFF
 a unit like in the case of turning ON/OFF a reactor in electricity production~\cite{thesececile}, or a machine in a data center~\cite{Albers17}. 
Obviously, whenever a reactor is
in the ON or OFF state, it is beneficial to maintain it at the same state for several consecutive time steps, in order to avoid the overhead costs of state changes. 
Therefore, the design of a production plan over a given time horizon has to take into account 
both the profits generated each
day from the operation of the chosen units, as well as the potential transition profits from
maintaining a unit at the same state for consecutive days.
We refer the reader interested in planning problems in electricity production to~\cite{thesececile}.

We formalize the problem as follows. We are 
given a time horizon $t=1,2,\ldots,T$, and a sequence of knapsack instances $I_1,I_2,\ldots,I_T$, one for each time step, defined on a set of $n$ objects. In every time step $t$ we have to choose a feasible knapsack $S_t$ of $I_t$, which gives a \emph{knapsack profit}. Taking into account transition costs, we measure the stability/similarity of two  consecutive solutions  $S_t$ and $S_{t+1}$ by identifying the objects for which the decision, to be picked or not, remains the same in $S_t$ and $S_{t+1}$, giving a \emph{transition profit}. We are asked to produce a sequence of solutions $S_1,S_2,\ldots,S_T$  so that the total knapsack profit plus the overall transition profit is maximized.

Our main contribution is a polynomial time approximation scheme (PTAS) for the multistage  version of the {\sc Knapsack} problem. Up to the best of our knowledge, this is the first approximation scheme for a multistage combinatorial optimization problem and its existence contrasts with the inapproximability results for other combinatorial optimization problems that are even polynomial-time solvable in the static case (e.g. the {\sc multistage  Spanning Tree} problem \cite{Gupta}, or the {\sc multistage  Bipartite Perfect Matching} problem \cite{Bampis}).

\subsection{Problem definition}
Formally, the {\sc Multistage  Knapsack} problem can be defined as follows.

\begin{defn} In the {\sc Multistage  Knapsack} problem ($MK$) we are given:
\begin{itemize}
\item a time horizon $T \in \mathbb{N}^*$, a set $N=\{1,2,\dots,n\}$ of objects;
\item For any $t \in \{1,\dots,T\}$, any $i\in N$:
\begin{itemize}
\item $p_{ti}$ the profit of taking object $i$ at time $t$
\item $w_{ti}$ the weight of object $i$ at time $t$
\end{itemize} 
\item For any $t \in \{1,\dots,T-1\}$, any $i\in N$:
\begin{itemize}
    \item $B_{ti}  \in \mathbb{R^{+}}$  the bonus of the object $i$ if we keep the same decision for $i$ at time $t$ and $t+1$. 
\end{itemize}
\item For any $t \in \{1,\dots,T\}$: the capacity $C_t$ of the knapsack at time $t$.
\end{itemize}
We are asked to select a subset $S_t\subseteq N$ of objects at each time $t$ so as to respect the capacity constraint: $\sum_{i\in S_t} w_{ti}\leq C_t$. To a solution $S=(S_1,\dots,S_T)$ are associated:
\begin{itemize}
  \item A knapsack profit $\sum_{t=1}^T\sum_{i\in S_t} p_{ti}$ corresponding to the sum of the profits of the $T$ knapsacks;
  \item A transition profit $\sum\limits_{t=1} ^ {T-1} \sum\limits_{i \in \Delta_t} B_{ti}$ where $\Delta_t$ is the set of objects either taken or not taken at both time steps $t$ and $t+1$ in $S$ (formally $\Delta_t=(S_t\cap S_{t+1})\cup(\overline{S_t}\cap \overline{S_{t+1}})$).\\
  
\end{itemize} 
The value of the solution $S$ is the sum of the knapsack profit and the transition profit, to be maximized.
\end{defn}

\subsection{Related works}

\noindent{\bf Multistage combinatorial optimization.}  A lot of optimization
problems have been considered in online or semi-online settings, where the input changes over
time and the algorithm has to modify the solution (re-optimize) by making as few changes
as possible. We refer the reader to \cite{Anthony, Blanchard, Cohen, Gu, Megow, Nagarajan}
 and the references therein. 
 
Multistage optimization has been studied for fractional problems by Buchbinder et
al. \cite{Buchbinder} and Buchbinder, Chen and Naor \cite{Buchbinder+}. The multistage model considered in this article is the one studied in Eisenstat
et al. \cite{Eisenstat} and Gupta et al. \cite{Gupta}. Eisenstat
et al. \cite{Eisenstat} studied the multistage version of facility location problems. They proposed a logarithmic approximation algorithm. An et al. \cite{An} obtained constant factor approximation for some related problems.
 Gupta et al. \cite{Gupta} studied the {\sc Multistage  Maintenance Matroid} problem for
both the offline and the online settings. They presented a logarithmic approximation
algorithm for this problem, which includes as a special case a natural multistage version of
{\sc Spanning Tree}. The same paper also introduced the study of the {\sc Multistage  Minimum Perfect Matching} problem. They showed that the problem becomes hard to approximate
 even for a constant number of stages. Later, Bampis et al. \cite{Bampis} showed that the problem is
hard to approximate even for bipartite graphs and for the case of two time steps. In the case where the edge costs are metric within every time step they first proved that the problem remains APX-hard even for two time steps. They also show that the maximization version of the problem admits a constant factor approximation algorithm but is APX-hard. In another work
\cite{Bampis+}, the {\sc Multistage Max-Min Fair Allocation}
problem has been studied in the offline and  the online settings. This corresponds to a multistage variant of the {\sc Santa Klaus} problem. For the off-line setting, the authors
showed that the multistage version of the problem is much harder
than the static one. They provide
constant factor approximation algorithms for the off-line setting.\\

\noindent{\bf Knapsack variants.}

Our work builds upon the {\sc Knapsack} literature \cite{Kelerrer}. It is well known that there is a simple 2-approximation algorithm as well as a fully polynomial time (FPTAS) for the static case \cite{Ibarra, Lawler, Magazine, Kellerer}. There are two variants that are of special interest for our work: 

(i) The first variant is a  generalization of the {\sc Knapsack} problem known as the $k$-{\sc Dimensional Knapsack} ($k-DKP$) problem:
%in which each object has $k$ nonnegative integer weights and $k$ positive integer capacities. The objective is to choose a subset of maximum total profit among all subsets whose weight does not exceed the capacity of the $k$ dimensions. 
\begin{defn}
In the $k$-dimensional {\sc Knapsack} problem ($k-DKP$), we have a set $N=\{1,2,\dots,n\}$ of objects. Each object $i$ has a profit $p_i$ and $k$ weights $w_{ji}$, $j=1,\dots,k$. We are also given $k$ capacities $C_j$. The goal is to select a subset $Y\subseteq N$ of objects such that:
\begin{itemize}
  \item The capacity constraints are respected: for any $j$, $\sum_{i\in Y}w_{ji}\leq C_j$;
  \item The profit $\sum_{i\in Y} p_i$ is maximized. 
\end{itemize} 
\end{defn}

It is well known that for the usual {\sc Knapsack} problem, in the continuous relaxation (variables in $[0,1]$), at most one variable is fractional. Caprara et al. \cite{Carpara} showed that this can be generalized for $k-DKP$.

Let us consider the following ILP formulation $(ILP-DKP)$ of the problem:

   \begin{eqnarray*}
     \ \left \{ \begin{array}{ll}
    \max \ \sum\limits_{i \in N} p_{i} y_{i}   \\
    s.t. \left |
    %\begin{array}{ll}
    \begin{array}{llllll}
    \sum\limits_{i \in N} w_{ji} y_{i} & \leq &C_j & \forall{j} \in \{1,...,k\} \\
    y_{i} \in \{0,1\} & & & \forall{i} \in N\\
    %\end{array}\\
    \end{array}
    \right.
    \end{array} 
    \right.
    \end{eqnarray*}

\begin{theorem}\cite{Carpara}\label{th:kdkp}
In the continuous relaxation $(LP-DKP)$ of $(ILP-DKP)$ where variables are in $[0,1]$, in any basic solution at most $k$ variables are fractional.
\end{theorem}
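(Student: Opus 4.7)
The plan is to derive the bound from the standard polyhedral characterization of basic feasible solutions: in an LP on $n$ variables defined by a system of linear inequalities, a point is a basic feasible solution if and only if the set of constraints that are tight (active) at that point contains $n$ linearly independent ones. I would apply this directly to $(LP\text{-}DKP)$, whose constraint system consists of the $k$ capacity inequalities $\sum_{i} w_{ji} y_i \le C_j$ together with the $2n$ box inequalities $y_i \ge 0$ and $y_i \le 1$.

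The key step is a simple counting argument. Let $y^*$ be a basic feasible solution and let $f$ denote the number of fractional coordinates, i.e., those $i$ with $0 < y^*_i < 1$. For each such fractional index neither box constraint $y_i \ge 0$ nor $y_i \le 1$ is active, so at most $n - f$ of the $2n$ box inequalities are tight at $y^*$ (at most one per non-fractional coordinate). By the basic-solution characterization, the total number of tight, linearly independent constraints at $y^*$ is at least $n$. Therefore the number of tight capacity constraints is at least $n - (n - f) = f$. Since there are only $k$ capacity constraints in total, this forces $f \le k$, which is exactly the claim.

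There is no real obstacle here: once the standard LP fact about basic feasible solutions is invoked, everything reduces to counting active constraints and observing that fractional variables cannot contribute any tight box constraints. The only thing one has to be slightly careful about is phrasing: one should argue with the rank of the active set (not just the count), but since the box constraints corresponding to distinct coordinates are automatically linearly independent of each other and of any subset of the capacity rows on the remaining coordinates, the count translates into a valid linear-independence statement without further work. I would present the argument in roughly that order: recall the characterization, bound the active box constraints by $n-f$, conclude that at least $f$ capacity rows must be tight, and deduce $f \le k$.
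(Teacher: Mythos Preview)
Your argument is correct and is the standard proof of this classical fact. Note, however, that the paper does not give its own proof of this theorem: it is stated with a citation to Caprara et al.\ and used as a black box, so there is nothing in the paper to compare your proposal against.
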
 

Note that with an easy affine transformation on variables, the same result holds when variable $y_i$ is subject to $a_i\leq y_i\leq b_i$ instead of $0\leq y_i\leq 1$: {\it in any basic solution at most $k$ variables $y_i$ are such that $a_i<y_i<b_i$}.

Caprara et al. \cite{Carpara} use the result of Theorem~\ref{th:kdkp} to show that for any fixed constant $k$ $(k-DKP)$ admits a polynomial time approximation scheme (PTAS). Other PTASes
have been presented in  \cite{Oguz, Frieze}.  Korte and Schrader \cite{Korte} showed that there is no FPTAS for $k-DKP$ unless $P=NP$. 

(ii) The second related variant is a simplified version of $(k-DKP)$ called $\text{CARDINALITY}(2-KP)$, where the dimension is 2, all the profits are 1 and, given a $K$, we are asked if there is a solution of value at least $K$ (decision problem). In other words, given two knapsack constraints, can we take $K$ objects and verify the two constraints? The following result is shown in \cite{Kelerrer}. 

\begin{theorem}\cite{Kelerrer}\label{th:dimkp}
$\text{CARDINALITY}(2-KP)$ is $NP$-complete.
\end{theorem}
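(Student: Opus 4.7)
The plan is to first note that $\text{CARDINALITY}(2-KP)$ lies in NP, since a candidate subset $Y \subseteq N$ can be checked against the cardinality threshold and both capacity constraints in polynomial time. The bulk of the work is the NP-hardness reduction.

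I would reduce from {\sc Equal Cardinality Partition}: given an even integer $n$ and positive integers $a_1, \ldots, a_n$ with $\sum_{i=1}^n a_i = 2S$, decide whether some $T \subseteq \{1, \ldots, n\}$ with $|T| = n/2$ satisfies $\sum_{i \in T} a_i = S$. This variant of {\sc Partition} is NP-complete; one short padding reduction from ordinary {\sc Partition} establishes it, by appending $n$ zero-weight dummy items to any instance so as to force the balanced cardinality $n$ out of the new $2n$ items.

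Given such a {\sc Equal Cardinality Partition} instance, I would build a $\text{CARDINALITY}(2-KP)$ instance on the same $n$ objects by setting $w_{1i} := a_i$ with capacity $C_1 := S$, $w_{2i} := 2S - a_i$ with capacity $C_2 := S(n-1)$, and cardinality threshold $K := n/2$. Non-negativity of $w_{2i}$ is automatic since $a_i < 2S$ (as $n \geq 2$ and all $a_j$ are positive). Correctness reduces to a one-line tightness argument: for any feasible $Y$, summing the two capacity constraints yields
\[
2S\,|Y| \;=\; \sum_{i \in Y}(w_{1i} + w_{2i}) \;\leq\; C_1 + C_2 \;=\; Sn,
\]
so $|Y| \leq n/2$. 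Combined with the requirement $|Y| \geq K = n/2$, this forces $|Y| = n/2$, and hence both capacity constraints must be individually tight; in particular $\sum_{i \in Y} a_i = C_1 = S$, producing a YES-certificate for {\sc Equal Cardinality Partition}. The converse direction is symmetric: any balanced partition $T$ satisfies both capacity constraints with equality at cardinality $n/2$.

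The one mild obstacle I anticipate is the choice of source problem. Reducing directly from plain {\sc Partition} is awkward because the cardinality threshold $K$ is not naturally determined by the input; once one uses an equal-cardinality variant instead, the reduction is essentially forced by the identity $C_1 + C_2 = 2S \cdot K$, and everything falls into place.
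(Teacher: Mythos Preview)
The paper does not prove this theorem at all: it is quoted from the Kellerer--Pferschy--Pisinger monograph and used only as a black box in the later no-FPTAS reduction (Theorem~\ref{thnofptas}). So there is nothing in the paper to compare your argument against.

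On its own merits, your reduction is correct. The key identity $w_{1i}+w_{2i}=2S$ for every $i$ makes the two knapsack constraints jointly force $|Y|\le n/2$, and together with the cardinality threshold $K=n/2$ this pins both constraints to equality, recovering exactly an equal-cardinality partition. Both directions go through as you describe.

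One small wrinkle: you state {\sc Equal Cardinality Partition} with \emph{positive} integers but then justify its NP-completeness by padding an ordinary {\sc Partition} instance with zero-weight dummies, which yields non-negative rather than positive inputs. This is harmless for your downstream construction (non-negativity of $w_{1i}$ and $w_{2i}$ still holds when $a_i\ge 0$, since $a_i\le 2S$), but you should either relax ``positive'' to ``non-negative'' in the statement of the source problem, or use a slightly different padding (e.g.\ shift all weights up by a constant) to keep strict positivity.
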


\subsection{Our contribution}
As stated before, our main contribution is to propose a PTAS for the {\sc multistage  Knapsack} problem. Furthermore, we prove that there is no  FPTAS for the problem even in the case where $T=2$, unless $P=NP$. We also give a pseudopolynomial time algorithm for the case where the number of steps is bounded by a fixed constant and we show that otherwise the problem remains NP-hard even in the case where all the weights, profits and capacities are 0 or 1. The following table summarizes our main result pointing out the impact of the number of time steps on the difficulty of the problem (``no FPTAS'' means ``no FPTAS unless P=NP'').

\begin{center}
\begin{tabular}{|l|c|r|}
  \hline
  $T=1$ & $T$ fixed &any $T$  \\
  \hline
  pseudopolynomial & pseudopolynomial & strongly $NP$-hard \\
    \hline
  FPTAS & PTAS & PTAS \\
   - & no FPTAS & no FPTAS \\
  \hline
\end{tabular}
\end{center}

We point out that the negative results (strongly NP-hardness and no FPTAS) hold even in the case of {\it uniform bonus} when $B_{ti}=B$ for all $i\in N$ and all $t=1,\dots,T-1$. 

%Due to space constraints, some proofs are deferred to a separate appendix.

%\section{A polynomial time approximation scheme}

%\subsection{Definition of the Multistage knapsack problem}

\section{ILP formulation}

The {\sc Multistage  Knapsack} problem can be written as an ILP as follows. We define $Tn$ binary variables $x_{ti}$ equal to 1 if $i$ is taken at time $t$ ($i\in S_t$) and 0 otherwise. We also define $(T-1)n$ binary variables $z_{ti}$ corresponding to the transition profit of object $i$ between time $t$ and $t+1$. The profit is 1 if $i$ is taken at both time steps, or taken at none, and 0 otherwise. Hence, $z_{ti}=1-|x_{(t+1)i}-x_{ti}|$. Considering that we solve a maximization problem, this can be linearized by the two inequalities: $z_{ti}\leq -x_{(t+1)i} + x_{ti} +1$ and $z_{ti}\leq x_{(t+1)i} - x_{ti} +1$. We end up with the following ILP (called $ILP-MK$):  

%\vspace{-0.6cm}

\begin{center}
    \begin{eqnarray*}
     \ \left \{ \begin{array}{ll}
    \max \ \sum\limits_{t=1} ^ T \sum\limits_{i \in N} p_{ti} x_{ti} + \sum\limits_{t=1} ^ {T-1} \sum\limits_{i \in N} z_{ti}B_{ti}  \\
    s.t. \left |
    %\begin{array}{ll}
    \begin{array}{llllll}
    \sum\limits_{i \in N} w_{ti} x_{ti} & \leq &C_t & \forall{t} \in \{1,...,T\} \\
    z_{ti} &\leq &-x_{(t+1)i} + x_{ti} +1 & \forall{t} \in \{1,...,T-1\},\forall{i} \in N\\
    z_{ti} &\leq & x_{(t+1)i} - x_{ti}+1 & \forall{t} \in \{1,...,T-1\},\forall{i} \in N\\
    x_{ti} \in \{0,1\} & & & \forall{t} \in \{1,...,T\}, \forall{i} \in N\\
    z_{ti} \in \{0,1\} & & & \forall{t} \in \{1,...,T-1\}, \forall{i} \in N\\
    %\end{array}\\
    \end{array}
    \right.
    \end{array} 
    \right.
    \end{eqnarray*}
    \end{center}

In devising the PTAS we will extensively use the linear relaxation $(LP-MK)$ of $(ILP-MK)$
where variables $x_{ti}$ and $z_{ti}$ are in $[0,1]$.

%\subsection{Dimensional knapsack}

\section{A polynomial time approximation scheme}

In this section we show that {\sc Multistage  Knapsack} admits a PTAS. The central part of the proof is to derive a PTAS when the number of steps is a fixed constant (Sections~\ref{sec:frac} and \ref{sec:constant}). The generalization to an arbitrary number of steps is done in Section~\ref{sec:arbi}.

To get a PTAS for a constant number of steps, the proof follows the two main ideas leading to the PTAS for $k-DKP$ in \cite{Carpara}. Namely, for $k-DKP$:

\begin{itemize}
	\item The number of fractional variables in the continuous relaxation of $k-DKP$ is at most $k$ (Theorem~\ref{th:kdkp}); 
	\item A combination of bruteforce search (to find the most profitable objects) and LP-based solution allows to compute a solution close to optimal.
\end{itemize}

The main difficulty is to obtain a similar result for the number of fractional variables in the (relaxed) {\sc Multistage  Knapsack} problem: we end up with a result stating that there are at most $T^3$ fractional variables in an optimal (basic) solution. The brute force part is similar in essence though some additional difficulties are overcome by an additional preprocessing step. 

We show how to bound the number of fractional variables in Section~\ref{sec:frac}. We first illustrate the reasoning on the case of two time-steps, and then present the general result. In Section~\ref{sec:constant} we present the PTAS for a constant number of steps.  

For ease of notation, we will sometimes write a feasible solution as $S=(S_1,\dots,S_T)$ (subsets of objects taken at each time step), or as $S=(x,z)$ (values of variables in $(ILP-MK)$ or $(LP-MK)$).

\subsection{Bounding the number of fractional objects in $(LP-MK)$}\label{sec:frac}

\subsubsection{Warm-up: the case of two time-steps}

We consider in this section the case of two time-steps ($T=2$), and focus on the linear relaxation $(LP-MK)$ of $(ILP-MK)$ with the variables $x_{ti}$ and $z_i$ in $[0,1]$ (we write $z_i$ instead of $z_{1i}$ for readability). We say that an object is {\it fractional} in a solution $S$ if $x_{1i}$, $x_{2i}$ or $z_i$ is fractional. %Otherwise we say that this object is integral in $S$. 

Let us consider a  (feasible) solution $\hat{S}=(\hat{x},\hat{z})$ of $(LP-MK)$, where $\hat{z}_{i}=1-|\hat{x}_{2i}-\hat{x}_{1i}|$ (variables $\hat{z}_{i}$ are set to their optimal value w.r.t. $\hat{x}$).  

We show the following.

\begin{prop}\label{lem:2steps}
	If $\hat{S}$ is a basic solution of $(LP-MK)$, at most 4 objects are fractional. 
\end{prop}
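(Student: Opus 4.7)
The plan is to count the tight linearly independent constraints at the basic solution $\hat{S}$ and to show that an object with three tight linearly independent \emph{local} constraints is necessarily integer-valued.

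Concretely, I will split the constraints of $(LP-MK)$ for $T=2$ into two classes. The \emph{global} class consists of the two capacity constraints $\sum_i w_{ti}x_{ti}\le C_t$. The \emph{local} class, for each object $i$, consists of the six box constraints and the two transition constraints $z_i\le\pm(x_{2i}-x_{1i})+1$; these eight involve only the three variables $x_{1i}, x_{2i}, z_i$. Since $(LP-MK)$ has $3n$ variables, any basic solution tightens $3n$ linearly independent constraints. At most $2$ of them are global, and for each object $i$ at most $3$ of them can be local (as the local constraints live in a $3$-dimensional coordinate subspace). Letting $c_i$ denote the local count, we get $\sum_i (3-c_i)\le 2$, so at most two objects satisfy $c_i<3$.

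The core step is then to show that $c_i=3$ forces $\hat{x}_{1i}, \hat{x}_{2i}, \hat{z}_i\in\{0,1\}$. Up to a choice of side for each tight box, the tight local constraints come from five candidate equalities: $A$: $x_{1i}=a$, $B$: $x_{2i}=b$, $C$: $z_i=c$ (with $a,b,c\in\{0,1\}$), and the two transition equalities $D$: $z_i=-x_{2i}+x_{1i}+1$, $E$: $z_i=x_{2i}-x_{1i}+1$. I will enumerate the ten triples from these five: each triple either is linearly dependent, which happens only for $\{C,D,E\}$ (since $D+E$ already yields $z_i=1$, so $C$ is a linear combination of $D$ and $E$ and the triple cannot attain $c_i=3$), or else uniquely determines $(x_{1i},x_{2i},z_i)$ at an integer point; for instance $\{A,D,E\}$ forces $x_{2i}=x_{1i}=a$ and $z_i=1$, while $\{A,B,D\}$ gives $z_i=a-b+1$, which is integer whenever it lies in $[0,1]$.

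Putting the two parts together, only the at-most-two objects with $c_i<3$ can carry a fractional coordinate, so the number of fractional objects is at most $2$, and in particular at most $4$. I expect the main obstacle to be the case analysis of the ten triples: each is a small $3\times 3$ linear system and the check is routine, but one has to verify for every admissible triple that its unique solution is indeed integer, and to identify $\{C,D,E\}$ as the sole linearly dependent configuration.
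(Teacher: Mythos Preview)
Your argument is correct and in fact establishes a sharper bound (at most $2$ fractional objects rather than $4$), but the route is genuinely different from the paper's.

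The paper proceeds geometrically: it splits the objects into $L=\{i:\hat x_{1i}=\hat x_{2i}\}$ and $P=N\setminus L$, and shows that having two objects of a given ``type'' fractional lets one write $\hat S$ as a nontrivial midpoint of two feasible solutions. This yields at most one fractional $\hat x_{1i}$ in $P$, at most one fractional $\hat x_{2i}$ in $P$, and (via an auxiliary $2$--DKP on $L$) at most two fractional objects in $L$, for a total of $4$. Your approach instead counts ranks of tight constraints: with $3n$ variables and only two global (capacity) rows, every basic solution must draw at least $3n-2$ linearly independent rows from the per-object local blocks, each of which has rank at most $3$; hence all but at most two objects have full local rank $3$, and your case analysis of the ten triples from the five row directions $\{(1,0,0),(0,1,0),(0,0,1),(-1,1,1),(1,-1,1)\}$ shows that full local rank forces integrality (the single dependent triple is $\{C,D,E\}$, consistent with $D+E$ implying $z_i=1$). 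The feasibility caveat you mention (``integer whenever it lies in $[0,1]$'') is exactly what rules out the spurious values $-1$ or $2$ that some triples would otherwise produce, so the enumeration goes through.

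What each approach buys: your rank-counting argument is shorter, entirely linear-algebraic, and yields the tighter bound $2$; it also suggests a possible improvement for general $T$ (same counting gives at most $T$ objects with deficient local rank, versus the paper's $T^3$), although verifying that full local rank forces integrality becomes a larger case analysis as $T$ grows. The paper's convex-combination argument is looser for $T=2$ but is structured precisely to scale: the decomposition into maximal constant intervals $F(t_0,t_1)$ in the general proof is the natural extension of the $L$/$P$ split, and the auxiliary multi-dimensional knapsack used there mirrors the $2$--DKP step you bypass.
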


\begin{proof}
 First note that since we assume $\hat{z}_i=1-|\hat{x}_{1i}-\hat{x}_{2i}|$,  if $\hat{x}_{1i}$ and $\hat{x}_{2i}$  are both integers then $\hat{z}_i$ is an integer. So if an object $i$ is fractional either $\hat{x}_{1i}$ or $\hat{x}_{2i}$ is fractional. 
  
 Let us denote:
 \begin{itemize}
 	\item $L$ the set of objects $i$ such that  $\hat{x}_{1i}=\hat{x}_{2i}$.
 	\item $P=N\setminus L$ the set of objects $i$ such that $\hat{x}_{1i}\neq \hat{x}_{2i}$.\\
 \end{itemize} 
 
We first show Fact 1.\\

\noindent {\it Fact 1.} In $P$ there is at most one object $i$ with $\hat{x}_{1i}$ fractional.\\
 
Suppose that there are two such objects $i$ and $j$. Note that since $0<|\hat{x}_{1i}-\hat{x}_{2i}|<1$, $\hat{z}_i$  is fractional, and so is $\hat{z}_j$. Then, for a sufficiently small $\epsilon >0$, consider the solution $S_1$ obtained from $\hat{S}$ by transfering at time 1 an amount $\epsilon$ of weight from $i$ to $j$ (and adjusting consequently $z_i$ and $z_j$). Namely, in $S_1$:
\begin{itemize}
	\item $x^1_{1i}=\hat{x}_{1i}-\frac{\epsilon}{w_{1i}}$, $z^1_i=\hat{z}_i-d_i\frac{\epsilon}{w_{1i}}$, where $d_i=1$ if $\hat{x}_{2i}>\hat{x}_{1i}$ and $d_i=-1$ if $\hat{x}_{2i}<\hat{x}_{1i}$ (since $i$ is in $P$ $\hat{x}_{2i}\neq \hat{x}_{1i}$).
	\item $x^1_{1j}=\hat{x}_{1j}+\frac{\epsilon}{w_{1j}}$, $z^1_j=\hat{z}_i+d_j\frac{\epsilon}{w_{1j}}$, where $d_j=1$ if $\hat{x}_{2j}>\hat{x}_{1j}$ and $d_j=-1$ otherwise.
\end{itemize}
Note that (for $\epsilon$ sufficiently small) $S_1$ is feasible. Indeed (1) $\hat{x}_{1i},\hat{x}_{1j},\hat{z}_{i}$ and $\hat{z}_j$ are fractional (2) the weight of the knapsack at time 1 is the same in $S_1$ and in $\hat{S}$ (3) if $\hat{x}_{1i}$ increases by a small $\delta$, if $\hat{x}_{2i}>\hat{x}_{1i}$ then $|\hat{x}_{2i}-\hat{x}_{i1}|$ decreases by $\delta$ so $\hat{z}_i$ can increase by $\delta$ (so $d_i=1$), and if  $\hat{x}_{2i}<\hat{x}_{i1}$ then $\hat{z}_i$ has to decrease by $\delta$ (so $d_i=-1$),  and similarly for $\hat{x}_{1j}$. \\

Similarly, let us define $S_2$ obtained from $\hat{S}$ with the reverse transfer (from $j$ to $i$). In $S_2$:
\begin{itemize}
	\item $x^2_{1i}=\hat{x}_{1i}+\frac{\epsilon}{w_{1i}}$, $z^2_i=\hat{z}_i+d_i\frac{\epsilon}{w_{1i}}$
	\item $x^2_{1j}=\hat{x}_{1j}-\frac{\epsilon}{w_{1j}}$, $z^2_j=\hat{z}_i-d_j\frac{\epsilon}{w_{1j}}$
\end{itemize}
As previously, $S_2$ is feasible. Then $\hat{S}$ is clearly a convex combination of $S_1$ and $S_2$ (with coefficient 1/2), so not a basic solution, and Fact 1 is proven. 

In other words (and this interpretation will be important in the general case), for this case we can focus on variables at time one, and interpret \emph{locally} the problem as a (classical, unidimensional) fractional knapsack problem. By locally, we mean that if $\hat{x}_{1i}<\hat{x}_{2i}$ then $x_{1i}$ must be in $[0,\hat{x}_{2i}]$ (in $S^1$, $x^1_{1i}$ cannot be larger than $\hat{x}_{2i}$, otherwise the previous value of $z^1_{i}$  would be erroneous); similarly if $\hat{x}_{1i}>\hat{x}_{2i}$ then $x_{1i}$ must be in $[\hat{x}_{2i},1]$. The profit associated to object $i$ is $p_{1i}+d_iB_{1i}$ (if $x_{i1}$ increases/decreases by $\epsilon$, then the knapsack profit increases/decreases by $p_{1i}\epsilon$, and the transition profit increases/decreases by $\epsilon d_i B_{1i}$, as explained above). Then we have at most one fractional variable, as in any fractional knapsack problem.\\

In $P$ there is at most one object $i$ with $\hat{x}_{1i}$ fractional. Similarly there is at most one object $k$ with $\hat{x}_{2k}$ fractional. In $P$, for all but at most two objects, both $\hat{x}_{1i}$ and $\hat{x}_{2i}$, and thus $\hat{z}_i$, are integers.\\

Note that this argument would not hold for variables in $L$. Indeed if $\hat{x}_{1i}=\hat{x}_{2i}$, then $\hat{z}_i=1$, and the transition profit decreases in {\it both} cases: when $\hat{x}_{1i}$ increases by $\delta>0$ and when it decreases by $\delta$. So, we cannot express $\hat{S}$ as a convex combination of $S_1$ and $S_2$ as previously. \\

However, let us consider the following linear program $2-DKP$ obtained by fixing variables in $P$ to their values in $\hat{S}$, computing the remaining capacities $C'_t=C_t-\sum_{j\in P}w_{tj}\hat{x}_{tj}$, and ``imposing'' $x_{1i}=x_{2i}$: 

\vspace{-0.5cm}

\[ \left \{ 
\begin{tabular}{ccc}
$ \max \sum\limits_{i \in L} (p_{1i}+p_{2i})y_{i}+\sum\limits_{i \in L} B_{1i}$\\
$  \sum\limits_{i \in L} w_{1i}y_{i}$ & $ \leq $ & $C_1'$\\
$  \sum\limits_{i \in L} w_{2i}y_{i}$ & $ \leq $ & $C_2'$\\
$y_{i} \in [0,1] $&&$\forall i \in L$ \\
\end{tabular} 
\right.
\] 

Clearly, the restriction of $\hat{S}$ to variables in $L$ is a solution of $2-DKP$. Formally, let $\hat{S}_L=(\hat{y}_j, j\in L)$ defined as $\hat{y}_j=\hat{x}_{1j}$. $\hat{S}_L$ is feasible for $2-DKP$. Let us show that it is basic: suppose a contrario that $\hat{S}_L=\frac{S^1_L+S^2_L}{2}$, with $S^1_L=(y^1_i,i\in L)\neq S^2_L$ two feasible solutions of  $2-DKP$. Then consider the solution $S^1=(x^1,y^1)$ of $(LP-MK)$ defined as:
\begin{itemize}
	\item If $i\in L$ then $x^1_{1i}=x^1_{2i}=y^1_i$, and $z^{1}_{1i}=1=\hat{z}_{1i}$.
	\item Otherwise (for $i$ in $P$) $S^1$ is the same as $\hat{S}$. 
\end{itemize}
$S^1$ is clearly a feasible solution of {\sc Multistage  Knapsack}. If we do the same for $S^2_L$, we get a (different) feasible solution $S^2$, and $\hat{S}=\frac{S^1+S^2}{2}$, so $\hat{S}$ is not basic, a contradiction. 

By the result of \cite{Carpara}, $\hat{S}_L$ has at most 2 fractional variables. Then, in $L$, for all but at most 2 variables both $\hat{x}_{1i}$, $\hat{x}_{2i}$ and $\hat{z}_i$ are integers.   
\end{proof}

\subsubsection{General case}

The case of 2 time steps suggests to bound the number of fractional objects by considering 3 cases:
\begin{itemize}
	\item Objects with $\hat{x}_{1i}$ fractional and $\hat{x}_{1i}\neq \hat{x}_{2i}$. As explained in the proof of Proposition~\ref{lem:2steps}, this can be seen locally (as long as $x_{1i}$ does not reach $\hat{x}_{2i}$) as a knapsack problem from which we can conclude that there is at most 1 such fractional object.
	\item Similarly, objects with $\hat{x}_{2i}$ fractional and $\hat{x}_{1i}\neq \hat{x}_{2i}$.
	\item Objects with $\hat{x}_{1i}=\hat{x}_{2i}$ fractional. As explained in the proof of Proposition~\ref{lem:2steps}, this can be seen as a $2-DKP$ from which we can conclude that there are at most 2 such fractional objects.
\end{itemize}

For larger $T$, we may have different situations. Suppose for instance that we have 5 time steps, and a solution $(x,z)$ with an object $i$ such that: $x_{1i}<x_{2i}=x_{3i}=x_{4i}<x_{5i}$. So we have $x_{ti}$ fractional and constant for $t=2,3,4$, and different from $x_{1i}$ and $x_{5i}$. The idea is to say that we cannot have many objects like this (in a basic solution), by interpreting these objects on time steps $3,4,5$ as a basic optimal solution of a $3-DKP$ (locally, i.e. with a variable $y_i$ such that $x_{1i}\leq y_i\leq x_{5i}$).  

Then, roughly speaking, the idea is to show that for any pair of time steps $t_0\leq t_1$, we can bound the number of objects which are fractional and constant on this time interval $[t_0,t_1]$ (but not at time $t_0-1$ and $t_1+1$). Then a sum on all the possible choices of $(t_0,t_1)$ gives the global upper bound.

Let us state this rough idea formally. In all this section, we consider a (feasible) solution $\hat{S}=(\hat{x},\hat{z})$ of $(LP-MK)$, where $\hat{z}_{ti}=1-|\hat{x}_{(t+1)i}-\hat{x}_{ti}|$ (variables $\hat{z}_{ti}$ are set to their optimal value w.r.t. $\hat{x}$). 

In such a solution $\hat{S}=(\hat{x},\hat{z})$, let us define as previously an object as {\it fractional} if at least one variable $\hat{x}_{ti}$ or $\hat{z}_{ti}$ is fractional.  Our goal is to show the following result. 

\begin{theorem}\label{theo:frac}
If $\hat{S}=(\hat{x},\hat{z})$ is a basic solution of $(LP-MK)$, it has at most $T^3$ fractional objects. 
\end{theorem}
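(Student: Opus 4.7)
The plan is to extend the two-step analysis by decomposing each fractional object according to its maximal \emph{fractional plateaus}. For each pair $1\le t_0\le t_1\le T$, let $\mathcal{F}(t_0,t_1)$ denote the set of objects $i$ for which $\hat{x}_{t_0i}=\hat{x}_{(t_0+1)i}=\cdots=\hat{x}_{t_1i}=v_i$ for some common fractional value $v_i\in(0,1)$, and such that the plateau is \emph{maximal}: $\hat{x}_{(t_0-1)i}\neq v_i$ when $t_0>1$, and $\hat{x}_{(t_1+1)i}\neq v_i$ when $t_1<T$. Since $\hat{z}_{ti}=1-|\hat{x}_{(t+1)i}-\hat{x}_{ti}|$ is automatically integral whenever $\hat{x}_{ti}$ and $\hat{x}_{(t+1)i}$ both are, every fractional object has at least one fractional $\hat{x}_{ti}$, hence sits on some maximal fractional plateau, so the total number of fractional objects satisfies $|\mathcal{F}|\le\sum_{t_0\le t_1}|\mathcal{F}(t_0,t_1)|$.

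The core step is to prove $|\mathcal{F}(t_0,t_1)|\le t_1-t_0+1$ for every pair, which combined with $\sum_{t_0\le t_1}(t_1-t_0+1)=T(T+1)(T+2)/6\le T^3$ would give the theorem. For a fixed pair I would collapse the plateau of each $i\in\mathcal{F}(t_0,t_1)$ into a single variable $y_i$ ranging in a small interval $[a_i,b_i]$ containing $v_i$ strictly in its interior, chosen so that $y_i$ stays in $(0,1)$ and strictly on the same side of $\hat{x}_{(t_0-1)i}$ and $\hat{x}_{(t_1+1)i}$ as $v_i$. On this interval the inner transition terms $\hat{z}_{ti}$ for $t\in[t_0,t_1-1]$ remain at $1$, while the two boundary $z$'s vary linearly with $y_i$. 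Thus, after freezing all other variables at their values in $\hat{S}$, the contribution of the $y_i$ to the objective is linear (the per-unit coefficient being $\sum_{t=t_0}^{t_1}p_{ti}+d^L_iB_{(t_0-1)i}+d^R_iB_{t_1i}$ with signs $d^L_i,d^R_i$ determined by the neighbors), and the remaining constraints form a bounded-variable $k$-DKP in dimension $k=t_1-t_0+1$, to which the remark following Theorem~\ref{th:kdkp} applies.

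The main obstacle is establishing that $(v_i)_{i\in\mathcal{F}(t_0,t_1)}$ is itself a basic feasible solution of this sub-LP, because only then does the extension of Theorem~\ref{th:kdkp} force at most $t_1-t_0+1$ of the $y_i$ to be strictly interior to $[a_i,b_i]$. The argument is by contrapositive: any convex decomposition $(v_i)=\frac{1}{2}(y^1+y^2)$ with distinct $y^1,y^2$ feasible in the sub-LP lifts to two distinct feasible solutions $S^1,S^2$ of $(LP\text{-}MK)$ obtained from $\hat{S}$ by replacing $v_i$ with $y^j_i$ for $t\in[t_0,t_1]$ and $i\in\mathcal{F}(t_0,t_1)$, setting the inner $z$'s to $1$ and the boundary $z$'s to their tight values $1-|\hat{x}_{(t_0-1)i}-y^j_i|$ and $1-|y^j_i-\hat{x}_{(t_1+1)i}|$, while leaving everything else equal to $\hat{S}$. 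The choice of $[a_i,b_i]$ ensures feasibility of $S^j$, and the fact that all these absolute values are linear in $y^j_i$ on $[a_i,b_i]$ gives $\hat{S}=\frac{1}{2}(S^1+S^2)$, contradicting basicness of $\hat{S}$. Once $|\mathcal{F}(t_0,t_1)|\le t_1-t_0+1$ is secured, summing over the $O(T^2)$ pairs yields the announced bound of at most $T^3$ fractional objects.
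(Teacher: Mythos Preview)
Your proposal is correct and follows essentially the same route as the paper: define the maximal fractional plateaus $F(t_0,t_1)$, show that the restriction of $\hat{S}$ to each such set is a basic feasible point of a bounded-variable $(t_1-t_0+1)$-dimensional knapsack polytope (by lifting any convex decomposition back to $(LP\text{-}MK)$, using linearity of the boundary $z$'s on the chosen interval), apply the extension of Theorem~\ref{th:kdkp} to bound $|F(t_0,t_1)|\le t_1-t_0+1$, and sum over all $O(T^2)$ pairs. The only cosmetic difference is that the paper fixes $a_i,b_i$ explicitly from the neighboring values $\hat{x}_{(t_0-1)i},\hat{x}_{(t_1+1)i}$ rather than taking an unspecified small interval, but either choice yields the same conclusion.
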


Before proving the theorem, let us introduce some definitions and show some lemmas. Let $t_0,t_1$ be two time steps with $1\leq t_0\leq t_1 \leq T$. 

\begin{defn}\label{def:f}
	The set $F(t_0,t_1)$ associated to $\hat{S}=(\hat{x},\hat{z})$ is the set of objects $i$ (called fractional w.r.t. $(t_0,t_1)$) such that 
	\begin{itemize}
		\item $0<\hat{x}_{t_0i}=\hat{x}_{(t_0+1)i}=\dots = \hat{x}_{t_1i}<1$;
		\item Either $t_0=1$ or $\hat{x}_{(t_0-1)i}\neq \hat{x}_{t_0i}$;
		\item Either $t_1=T$ or $\hat{x}_{(t_1+1)i}\neq \hat{x}_{t_1i}$;
	\end{itemize}
\end{defn}

In other words, we have $\hat{x}_{ti}$ fractional and constant on $[t_0,t_1]$, and $[t_0,t_1]$ is maximal w.r.t. this property. 

For $t_0\leq t\leq t_1$, we note $C'_t$ the remaining capacity of knapsack at time $t$ considering that variables outside $F(t_0,t_1)$ are fixed (to their value in $\hat{x}$):
$$C'_t=C_t-\sum_{i\not\in F(t_0,t_1)}w_{ti}\hat{x}_{ti}.$$

As previously, we will see $x_{t_0i},\dots,x_{t_1i}$ as a single variable $y_i$. We have to express the fact that this variable $y_i$ cannot ``cross'' the values $\hat{x}_{(t_0-1)i}$ (if $t_0>1$) and $\hat{x}_{(t_1+1)i}$ (if $t_1<T$), so that everything remains locally (in this range) linear. So we define the lower and upper bounds $a_i,b_i$ induced by Definition~\ref{def:f} as:
\begin{itemize}
	\item Initialize $a_i\leftarrow 0$. If $\hat{x}_{(t_0-1)i}< \hat{x}_{t_0i}$ then do $a_i\leftarrow \hat{x}_{(t_0-1)i}$. If $\hat{x}_{(t_1+1)i}< \hat{x}_{t_1i}$ then do $a_i\leftarrow \max(a_i,\hat{x}_{(t_1+1)i})$.
	\item Similarly, initialize $b_i\leftarrow 1$. If $\hat{x}_{(t_0-1)i}> \hat{x}_{t_0i}$ then do $b_i\leftarrow \hat{x}_{(t_0-1)i}$. If $\hat{x}_{(t_1+1)i}> \hat{x}_{t_1i}$ then do $b_i\leftarrow \min(b_i,\hat{x}_{(t_1+1)i})$.
\end{itemize}

%\begin{itemize}
%\item If $\hat{x}_{(t_0-1)i}< \hat{x}_{t_0i}$ and $\hat{x}_{(t_1+1)i}< \hat{x}_{t_1i}$ then $a_i=\max(\hat{x}_{t_0i},\hat{x}_{(t_1+1)i})$; otherwise if $\hat{x}_{(t_0-1)i}< \hat{x}_{t_0i}$ (resp. $\hat{x}_{(t_1+1)i}<\hat{x}_{t_1i}$) then $a_i=\hat{x}_{(t_0-1)i}$ (resp. $a_i\hat{x}_{(t_1+1)i}$); otherwise $a_i=0$.
%\item Similarly,   if $\hat{x}_{(t_0-1)i}> \hat{x}_{t_0i}$ and $\hat{x}_{(t_1+1)i}> \hat{x}_{t_1i}$ then $b_i=\min(\hat{x}_{t_0i},\hat{x}_{(t_1+1)i})$; otherwise if $\hat{x}_{(t_0-1)i}> \hat{x}_{t_0i}$ (resp. $\hat{x}_{(t_1+1)i}>\hat{x}_{t_1i}$) then $b_i=\hat{x}_{(t_0-1)i}$ (resp. $b_i\hat{x}_{(t_1+1)i}$), otherwise $b_i=1$.
%\end{itemize} 
Note that with this definition $a_i<\hat{x}_{t_0,i}<b_i$.
This allows us to define the polyhedron $P(t_0,t_1)$ as the set of $y=(y_i:i\in F(t_0,t_1))$ such that

\[  \left \{ 
\begin{tabular}{ccccc}
$\sum\limits_{i \in F(t_0,t_1)} w_{ti} y_{i}$ &$\leq$ & $C'_t$&$\forall{t} \in \{t_0,...,t_1\}$  \\
$a_i\leq y_{i} \leq b_i$&&&$\forall{i} \in F(t_0,t_1)$ & \\
\end{tabular} 
\right.
\]

\begin{defn}
The solution associated to $\hat{S}=(\hat{x},\hat{z})$ is $\hat{y}$ defined as $\hat{y}_i=\hat{x}_{t_0i}$ for $i\in F(t_0,t_1)$.  
\end{defn}

\begin{lemma}\label{lem:basic}
If $\hat{S}=(\hat{x},\hat{z})$ is a basic solution, then the solution $\hat{y}$ associated to $(\hat{x},\hat{z})$ is feasible of $P(t_0,t_1)$ and basic. 
\end{lemma}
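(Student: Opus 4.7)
The plan is to verify feasibility directly from the construction, and then prove basicness by contradiction: I assume $\hat{y}$ is a nontrivial midpoint of two feasible solutions of $P(t_0,t_1)$, lift them back to two distinct feasible solutions of $(LP-MK)$ whose midpoint is $\hat{S}$, and obtain a contradiction with the basicness of $\hat{S}$.

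For feasibility, fix $t\in\{t_0,\dots,t_1\}$. Since $\hat{x}_{ti}=\hat{x}_{t_0 i}=\hat{y}_i$ for every $i\in F(t_0,t_1)$, I split the capacity constraint as
$$\sum_{i\in F(t_0,t_1)} w_{ti}\hat{y}_i \;=\; \sum_{i\in N} w_{ti}\hat{x}_{ti} \;-\; \sum_{i\notin F(t_0,t_1)} w_{ti}\hat{x}_{ti} \;\le\; C_t - \sum_{i\notin F(t_0,t_1)} w_{ti}\hat{x}_{ti} \;=\; C'_t.$$
The bounds $a_i\le \hat{y}_i\le b_i$ follow directly from the remark $a_i<\hat{x}_{t_0,i}<b_i$ noted right after the definition of $a_i,b_i$.

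For basicness, suppose by contradiction that $\hat{y}=\frac{1}{2}(y^1+y^2)$ with $y^1\neq y^2$ both feasible for $P(t_0,t_1)$. For $k\in\{1,2\}$ I define $S^k=(x^k,z^k)$ by copying $\hat{S}$ outside the ``block'' $F(t_0,t_1)\times\{t_0,\dots,t_1\}$, setting $x^k_{ti}=y^k_i$ for $i\in F(t_0,t_1)$ and $t\in\{t_0,\dots,t_1\}$, and setting every transition variable $z^k_{ti}$ to its maximal LP-feasible value $1-|x^k_{(t+1)i}-x^k_{ti}|$. The knapsack constraints for $S^k$ hold at times $t\in\{t_0,\dots,t_1\}$ because $y^k\in P(t_0,t_1)$, and at other times because nothing changed; the linearizations $z^k_{ti}\le 1\pm(x^k_{ti}-x^k_{(t+1)i})$ hold by the very choice of $z^k_{ti}$, so $S^k$ is feasible.

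The delicate step, and the main obstacle, is verifying $\hat{S}=\frac{1}{2}(S^1+S^2)$ on the $z$-variables. Transitions lying outside the block, as well as interior transitions $t\in\{t_0,\dots,t_1-1\}$ with $i\in F(t_0,t_1)$ (where $x^k_{ti}=x^k_{(t+1)i}=y^k_i$ forces $z^k_{ti}=1=\hat{z}_{ti}$), pose no issue. The nontrivial transitions are the ``boundary'' ones at $t=t_0-1$ (when $t_0>1$) and at $t=t_1$ (when $t_1<T$), for $i\in F(t_0,t_1)$. Here I need $|y^k_i-\hat{x}_{(t_0-1)i}|$ and, symmetrically, $|\hat{x}_{(t_1+1)i}-y^k_i|$ to be affine in $y^k_i$ over $[a_i,b_i]$, so that averaging yields $\hat{z}_{ti}$. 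This is exactly what the definition of $a_i,b_i$ guarantees: for example if $\hat{x}_{(t_0-1)i}<\hat{x}_{t_0 i}$ then $a_i\ge \hat{x}_{(t_0-1)i}$, hence every $y\in[a_i,b_i]$ lies above $\hat{x}_{(t_0-1)i}$ and the absolute value unfolds as $y-\hat{x}_{(t_0-1)i}$; the reverse case is handled by $b_i$, and the argument at $t_1+1$ is symmetric. Consequently $\frac{1}{2}(z^1_{ti}+z^2_{ti})=\hat{z}_{ti}$ at every boundary transition, $y^1\neq y^2$ yields $S^1\neq S^2$, and the equality $\hat{S}=\frac{1}{2}(S^1+S^2)$ contradicts the basicness of $\hat{S}$.
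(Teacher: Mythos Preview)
Your proof is correct and follows essentially the same approach as the paper's: feasibility is read off from the construction and the remark $a_i<\hat{x}_{t_0 i}<b_i$, and basicness is shown by lifting a nontrivial convex decomposition $\hat{y}=\tfrac{1}{2}(y^1+y^2)$ in $P(t_0,t_1)$ to a decomposition $\hat{S}=\tfrac{1}{2}(S^1+S^2)$ in $(LP\text{-}MK)$, using that the bounds $a_i,b_i$ force the boundary absolute values $|y-\hat{x}_{(t_0-1)i}|$ and $|y-\hat{x}_{(t_1+1)i}|$ to be affine on $[a_i,b_i]$. The paper's proof proceeds identically, only spelling out the boundary computation for $z_{(t_0-1)i}$ more explicitly by cases.
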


\begin{proof}
	Since $(\hat{x},\hat{z})$ is feasible, then $\hat{y}$ respects the capacity constraints (remaining capacity), and $a_i<\hat{y}_i=\hat{x}_{t_0i}<b_i$ so $\hat{y}$ is feasible.
	
	Suppose now that $\hat{y}=\frac{y^1+y^2}{2}$ for two feasible solutions $y^1\neq y^2$ of $P(t_0,t_1)$. We associate to $y^1$ a feasible solution $S^1=(x^1,z^1)$ as follows. 
	
	We fix $x^1_{ti}=\hat{x}_i$ for $t\not\in[t_0,t_1]$, and $x^1_{ti}=y^1_i$ for $t\in [t_0,t_1]$. We fix variables $z^1_{it}$ to their maximal values, i.e. $z^1_{ti}=1-|x^1_{(t+1)i}-x^1_{ti}|$. This way, we get a feasible solution $(x^1,z^1)$. Note that:
\begin{itemize}
	\item $z^1_{ti}=\hat{z}_{ti}$ for $t\not \in [t_0-1,t_1]$, since coresponding variables $x$ are the same in $S^1$ and $\hat{S}$;
	\item $z^1_{ti}=1=\hat{z}_{ti}$ for $t \in [t_0,t_1-1]$, since variables $x$ are constant on the interval $[t_0,t_1]$.
	\end{itemize} 
	Then, for variables $z$, the only modifications between $z^1$ and $\hat{z}$ concerns the ``boundary'' variables $z^1_{ti}$ for $t=t_0-1$ and $t=t_1$. 
	
We build this way two solutions $S^1=(x^1,z^1)$ and $S^2=(x^2,z^2)$ of $(LP-MK)$ corresponding to $y^1$ and $y^2$. By construction, $S^1$ and $S^2$ are feasible. They are also different provided that $y^1$ and $y^2$ are different. It remains to prove that $\hat{S}$ is the half sum of  $S^1$ and $S^2$.

Let us first consider variables $x$: 
\begin{itemize}
\item if $t\not \in [t_0,t_1]$, $x^1_{ti}=x^2_{ti}=\hat{x}_{ti}$ so $\hat{x}_{ti}=\frac{x^1_{ti}+x^2_{ti}}{2}$. 
\item if $t \in [t_0,t_1]$, $x^1_{ti}=y^1_i$ and $x^2_{ti}=y^2_t$, so $\frac{x^1_{ti}+x^2_{ti}}{2}=\frac{y^1_{i}+y^2_{i}}{2}=\hat{y}_i=\hat{x}_{ti}$.
\end{itemize}

Now let us look at  variables $z$: first, for $t\not\in \{t_0-1,t_1\}$, $z^1_{ti}=z^2_{ti}=\hat{z}_{ti}$ so $\hat{z}_{ti}=\frac{z^1_{ti}+z^2_{ti}}{2}$. The last and main part concerns about the last 2 variables $z_{(t_0-1)i}$ (if $t_0>1$) and $z_{t_1i}$ (if $t_1<T$).   	

We have $z^1_{(t_0-1)i}=1-|x^1_{t_0i}-x^1_{(t_0-1)i}|=1-|x^1_{t_0i}-\hat{x}_{(t_0-1)i}|$ and $\hat{z}_{(t_0-1)i}=1-|\hat{x}_{t_0i}-\hat{x}_{(t_0-1)i}|$. The crucial point is to observe that thanks to the constraint $a_i\leq y_i\leq b_i$, and by definition of $a_i$ and $b_i$,  $x^1_{t_0,i}$, $x^2_{t_0,i}$ and $\hat{x}_{t_0,i}$ are either all greater than (or equal to) $\hat{x}_{(t_0-1)i}$, or all lower than (or equal to)  $\hat{x}_{(t_0-1)i}$. 

Suppose first that they are all greater than (or equal to) $\hat{x}_{(t_0-1)i}$. Then:
%From this observation we deduce that: 
$$z^1_{(t_0-1)i}-\hat{z}_{(t_0-1)i}=|\hat{x}_{t_0,i}-\hat{x}_{t_0-1,i}|-|x^1_{t_0,i}-\hat{x}_{t_0-1,i}|=\hat{x}_{t_0i}-x^1_{t_0i}=\hat{y}_i-y^1_i$$
Similarly, $z^2_{(t_0-1)i}-\hat{z}_{(t_0-1)i}=\hat{y}_i-y^2_i$. So
	$$\frac{z^1_{(t_0-1)i}+z^2_{(t_0-1)i}}{2}=\frac{2\hat{z}_{(t_0-1)i}+2\hat{y}_i-y^1_{i}-y^2_{i}}{2}=\hat{z}_{(t_0-1)i}.$$

Now suppose that they are all lower than (or equal to) $\hat{x}_{t_0-1,i}$. Then:
%From this observation we deduce that: 
$$z^1_{(t_0-1)i}-\hat{z}_{(t_0-1)i}=|\hat{x}_{t_0i}-\hat{x}_{(t_0-1)i}|-|x^1_{t_0i}-\hat{x}_{(t_0-1)i}|=x^1_{t_0i}-\hat{x}_{t_0i}=y^1_i-\hat{y}_i$$
Similarly, $z^2_{(t_0-1)i}-\hat{z}_{(t_0-1)i}=y^2_i-\hat{y}_i$. So
	$$\frac{z^1_{(t_0-1)i}+z^2_{(t_0-1)i}}{2}=\frac{2\hat{z}_{(t_0-1)i}-2\hat{y}_i+y^1_{i}+y^2_{i}}{2}=\hat{z}_{(t_0-1)i}.$$

Then, in both cases, $\hat{z}_{(t_0-1)i}=\frac{z^1_{(t_0-1)i}+z^2_{(t_0-1)i}}{2}$.\\
	
With the very same arguments we can show that  $\frac{z^1_{t_1i}+z^2_{t_1i}}{2}=\hat{z}_{t_1i}$.
	
%	Let $\epsilon=x^1{t_0i}-\hat{x}_{t_0i}$ the increase (or decrease if $\epsilon\leq 0$) of variable  
%	\item if $t_0\neq 1$,  $\hat{z}_{(t_0-1)i}$ changes: it increases by $\epsilon$ if $\hat{x}_{(t_0-1)i}>\hat{x}_{t_0i}$ (the two variables get closer, the transition profit increases), it decreases by $\epsilon$ if $\hat{x}_{(t_0-1)i}<\hat{x}_{t_0i}$ ;
%	\item Similarly,  if $t_1\neq T$,  $\hat{z}_{t_1i}$ changes: it increases by $\epsilon$ if $\hat{x}_{(t_1+1)i}>\hat{x}_{t_1i}$ (the two variables get closer, the transition profit increases), it decreases by $\epsilon$ if $\hat{x}_{(t_1+1)i}<\hat{x}_{t_1i}$.
%\end{itemize}
Then, $\hat{S}$ is the half sum of $S^1$ and $S^2$, contradiction with the fact that $\hat{S}$ is basic.
\end{proof}

Now we can bound the number of fractional objects w.r.t. $(t_0,t_1)$.

\begin{lemma}\label{lem:frac}
$|F(t_0,t_1)|\leq t_1+1-t_0$.
\end{lemma}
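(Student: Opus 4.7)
The plan is to reduce the bound on $|F(t_0,t_1)|$ directly to the $k$-dimensional knapsack result of Caprara et al. (Theorem~\ref{th:kdkp}), applied to the polyhedron $P(t_0,t_1)$ via the associated solution $\hat{y}$.

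First, I would invoke Lemma~\ref{lem:basic}: since $\hat{S}$ is basic for $(LP-MK)$, the associated solution $\hat{y} = (\hat{x}_{t_0 i} : i \in F(t_0,t_1))$ is a basic feasible solution of $P(t_0,t_1)$. The polyhedron $P(t_0,t_1)$ is defined by exactly $t_1 - t_0 + 1$ capacity constraints (one per time step $t \in \{t_0,\dots,t_1\}$) together with the box constraints $a_i \leq y_i \leq b_i$ for $i \in F(t_0,t_1)$. Thus $P(t_0,t_1)$ is (up to the affine change of variables $y_i \leftarrow (y_i-a_i)/(b_i-a_i)$) exactly a $(t_1-t_0+1)$-dimensional knapsack polyhedron.

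Next, I would observe that every object $i \in F(t_0,t_1)$ contributes a variable that is \emph{strictly} between its bounds in $\hat{y}$. Indeed, by Definition~\ref{def:f} we have $0 < \hat{x}_{t_0 i} < 1$; moreover, the construction of $a_i$ and $b_i$ explicitly ensures $a_i < \hat{x}_{t_0 i} < b_i$ (the values $\hat{x}_{(t_0-1)i}$ and $\hat{x}_{(t_1+1)i}$ are only incorporated into $a_i$ or $b_i$ according to which side of $\hat{x}_{t_0 i}$ they lie, and when $t_0=1$ or $t_1=T$ the bounds default to $0$ and $1$). So all $|F(t_0,t_1)|$ coordinates of $\hat{y}$ are strictly between their respective bounds.

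Finally, I would apply the affine-transformed version of Theorem~\ref{th:kdkp} stated right after the theorem in the excerpt: in any basic solution of a $k$-dimensional knapsack LP with box constraints $a_i \leq y_i \leq b_i$, at most $k$ variables satisfy $a_i < y_i < b_i$. Here $k = t_1 - t_0 + 1$, and by the previous paragraph all $|F(t_0,t_1)|$ variables are strictly interior, so $|F(t_0,t_1)| \leq t_1 - t_0 + 1$, which is the claim. I do not expect a serious obstacle here: all the real work was done in Lemma~\ref{lem:basic} to transfer the basicness of $\hat{S}$ to $\hat{y}$; the present lemma is essentially a one-line consequence of Theorem~\ref{th:kdkp} once one checks that every variable indexed by $F(t_0,t_1)$ is genuinely interior to its box, which is what the definitions of $F(t_0,t_1)$, $a_i$ and $b_i$ are engineered to guarantee.
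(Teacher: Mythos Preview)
Your proposal is correct and follows essentially the same argument as the paper: invoke Lemma~\ref{lem:basic} to get that $\hat{y}$ is basic in $P(t_0,t_1)$, observe that $P(t_0,t_1)$ is a $(t_1-t_0+1)$-dimensional knapsack LP with box constraints and that every $i\in F(t_0,t_1)$ satisfies $a_i<\hat{y}_i<b_i$, then apply the affine version of Theorem~\ref{th:kdkp}. The paper's proof is terser but identical in substance.
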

\begin{proof}
$P(t_0,t_1)$ is a polyhedron corresponding to a linear relaxation of a $k-DLP$, with $k=t_1+1-t_0$. Since $\hat{y}$ is basic, using Theorem~\ref{th:kdkp} (and the note after) there are at most $k=t_1+1-t_0$ variables  $\hat{y}_i$ such that  $a_i<\hat{y}_i<b_i$. But by definition of  $F(t_0,t_1)$, for all $i\in F(t_0,t_1)$ $a_i<\hat{y}_i<b_i$. Then $|F(t_0,t_1)|\leq  t_1+1-t_0$.
\end{proof}

Now we can easily prove Theorem~\ref{theo:frac}.

\begin{proof}
First note that if $\hat{x}_{ti}$ and $\hat{x}_{(t+1)i}$ are integral, then so is $\hat{z}_{ti}$. Then, if an object $i$ is fractional at least one $\hat{x}_{ti}$ is fractional, and so $i$ will appear in (at least) one set $F(t_0,t_1)$. 

We consider all pairs $(t_0,t_1)$ with $1\leq t_0\leq t_1 \leq T$. Thanks to Lemma~\ref{lem:frac}, $|F(t_0,t_1)|\leq t_1+1-t_0$. So, the total number of fractional objects is at most:

$$N_T=\sum_{t_0=1}^{T}\sum_{t_1=t_0}^T(t_1+1-t_0) \leq T^3$$

Indeed, there are less than $T^2$ choices for $(t_0,t_1)$ and at most $T$ fractional objects for each choice. 
\end{proof}

Note that with standard calculation we get $N_T=\frac{T^3+3T^2+2T}{6}$, so for $T=2$ time steps $N_2=4$: we have at most 4 fractional objects, the same bound as in Proposition~\ref{lem:2steps}.

\subsection{A PTAS for a constant number of time steps}\label{sec:constant}

Now we can describe the $PTAS$. Informally, the algorithm first guesses the $\ell$ objects with the maximum reward in an optimal solution (where $\ell$ is defined as a function of $\epsilon$ and $T$), and then finds a solution on the remaining instance using the relaxation of the LP. The fact that the number of fractional objects is small allows to bound the error made by the algorithm. 

For a solution $S$ (either fractional or integral) we define $g_i(S)$ as the reward of object $i$ in solution $S$: $g_i(S)=\sum_{t=1}^Tp_{ti}x_{ti}+\sum_{t=1}^{T-1}z_{ti}B_{ti}$. The value of a solution $S$ is $g(S)= \sum_{i\in N}g_i(S)$.

%Formally, consider the algorithm $A_{LP}$ which:
%\begin{itemize}
%	\item Finds an optimal solution $S^r=(x^r,Z^r)$ of $(RPL1)$;
%	\item Takes at step 1 an object $i$ if $x^r_{1i}=1$, and it takes it at step 2 if $x^r_{2i}=1$.
%\end{itemize}

Consider the algorithm $A^{LP}$ which, on an instance $(ILP-MK)$ of {\sc Multistage Knapsack}:
\begin{itemize}
	\item Finds an optimal (basic) solution $S^r=(x^r,z^r)$ of the relaxation $(LP-MK)$ of $(ILP-MK)$;
	\item Takes at step $t$ an object $i$ if and only if $x^r_{ti}=1$.
\end{itemize}
Clearly, $A^{LP}$ outputs a feasible solution, the value of which verifies: 

\begin{equation}\label{eqlpr}
g(A^{LP})\geq g(S^r)-\sum_{i\in F} g_i(S^r)
\end{equation}

where $F$ is the set of fractional objects in $S^r$. Indeed, for each integral (i.e., not fractional) object the reward is the same in both solutions. \\

Now we can describe the algorithm \textbf{Algorithm $PTAS_{ConstantMK}$}, which takes as input an instance of {\sc Multistage  Knapsack} and an $\epsilon>0$.\\

\textbf{Algorithm $PTAS_{ConstantMK}$}
\begin{enumerate}
  \item Let $\ell:=\min\left\{\left\lceil \frac{(T+1)T^3}{\epsilon}\right\rceil,n\right\}$.
  \item For all $X \subseteq N$ such that $|X| =\ell$, $\forall X_1\subseteq X, ...,  \forall X_T\subseteq X$: 
 
   If for all $t=1,\dots,T$ $w_t(X_t) = \sum\limits_{j \in X_t} w_{tj} \leq C_t$, then:
   \begin{itemize}
   \item Compute the rewards of object $i\in X$ in the solution $(X_1,\dots,X_T)$, and find the smallest one, say $k$, with reward $g_k$. 
   \item On the subinstance of objects $Y=N\setminus X$: 
   	\begin{itemize}
   	 \item For all $i \in Y$, for all $t \in \{1,\dots,T\}$: if $p_{ti} > g_{k}$ then set $x_{ti} =0$.
     \item apply $A^{LP}$ on the subinstance of objects $Y$, with the remaining capacity $C'_t=C_t-w_t(X_t)$, where some variables $x_{ti}$ are set to 0 as explained in the previous step.
     \end{itemize}
  \item Let $(Y_1,..., Y_T)$ be the sets of objects taken at time $1,\dots,T$ by  $A^{LP}$. Consider the solution $(X_1\cup Y_1,...,X_T\cup Y_T)$.
  \end{itemize}
  \item Output the best solution computed.
\end{enumerate}

\begin{theorem}\label{theo:ptasconstant}
The algorithm $PTAS_{ConstantMK}$ is a $(1-\epsilon)$-approximation algorithm running in time $O\left(n^{O(T^5/\epsilon)}\right)$.
\end{theorem}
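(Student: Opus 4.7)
The plan is to compare the algorithm's output to a fixed optimal solution $OPT=(O_1,\dots,O_T)$ by focusing on the iteration in which the guess $X$ coincides with the $\ell$ objects of largest reward in $OPT$. Let $X^*\subseteq N$ be the set of the $\ell$ objects with largest values $g_i(OPT)$ (if $n<\ell$ the algorithm enumerates exhaustively and returns $OPT$, so assume $n\geq \ell$). Since per-object rewards depend only on the decisions taken for that object, in the iteration $X=X^*$ and $X_t=O_t\cap X^*$ the quantity $g_i$ computed on $(X_1,\dots,X_T)$ equals $g_i(OPT)$, and hence the algorithm's threshold $g_k$ equals $\min_{i\in X^*}g_i(OPT)$.

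Next I would justify the two ingredients used on the subinstance $Y=N\setminus X^*$. First, the filter ``$x_{ti}=0$ if $p_{ti}>g_k$'' is safe: for $i\in Y$ and $t$ such that $i\in O_t$, non-negativity of profits and bonuses yields $p_{ti}\leq g_i(OPT)\leq g_k$, so the filter never excludes $OPT$. Second, and this is the key structural inequality, I claim $\sum_{t=1}^{T-1}B_{ti}\leq g_k$ for every $i\in Y$. The reason is that the solution obtained from $OPT$ by removing object $i$ at every time step remains feasible (capacities only slacken) and gives a per-object-$i$ reward of exactly $\sum_t B_{ti}$, since every transition for $i$ is preserved with decision ``not taken''. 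Optimality of $OPT$ then forces $g_i(OPT)\geq \sum_t B_{ti}$, whence the bound follows.

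Now I would run $A^{LP}$ on the filtered subinstance and let $S^r$ be the basic LP optimum. Since the restriction of $OPT$ to $Y$ is LP-feasible, $g(S^r)\geq \sum_{i\in Y}g_i(OPT)$. For any fractional object $i\in F$, the two ingredients above combine into
\[
g_i(S^r)=\sum_{t}p_{ti}x^r_{ti}+\sum_{t}z^r_{ti}B_{ti}\leq T\,g_k+\sum_{t}B_{ti}\leq (T+1)g_k,
\]
because each term $p_{ti}x^r_{ti}$ is at most $g_k$ (by the filter) and $z^r_{ti}\leq 1$. Using inequality~\eqref{eqlpr} and Theorem~\ref{theo:frac}, $g(A^{LP})\geq g(S^r)-T^3(T+1)g_k$. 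Adding the $X^*$ contribution, which equals $\sum_{i\in X^*}g_i(OPT)$ exactly, the returned solution has value at least $g(OPT)-T^3(T+1)g_k$. Finally, $\ell\, g_k\leq \sum_{i\in X^*}g_i(OPT)\leq g(OPT)$ gives $g_k\leq g(OPT)/\ell$, so by the choice $\ell\geq (T+1)T^3/\epsilon$ the output value is at least $(1-\epsilon)g(OPT)$.

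For the running time, the outer loop enumerates $\binom{n}{\ell}\leq n^\ell$ subsets $X$, and for each one there are at most $(2^\ell)^T=2^{\ell T}$ choices of $(X_1,\dots,X_T)$; each iteration solves one LP in polynomial time. Since $\ell=O(T^4/\epsilon)$, the product is bounded by $n^{O(T^5/\epsilon)}$ (using $2^{\ell T}\leq n^{\ell T}$ for $n\geq 2$). The step I expect to be most delicate is the per-object bound $g_i(S^r)\leq (T+1)g_k$: without the lemma $\sum_t B_{ti}\leq g_k$, a large bonus on a non-guessed object could inflate the LP objective far beyond $g_k$ and wreck the rounding error budget. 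Once that is in hand, the remaining pieces assemble cleanly.
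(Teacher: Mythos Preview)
Your proof is correct and follows essentially the same approach as the paper: guess the $\ell$ objects of largest reward in an optimal solution, argue the filter $x_{ti}=0$ when $p_{ti}>g_k$ is safe, use the optimality argument ``never taking $i$ already yields reward $\sum_t B_{ti}$'' to get $\sum_t B_{ti}\le g_k$ for $i\in Y$, bound each fractional object's LP reward by $(T+1)g_k$, and combine with Theorem~\ref{theo:frac} and $g_k\le g(OPT)/\ell$. The only cosmetic difference is that the paper bounds $g_k\le \sum_{i\in X}g_i(S^*)/\ell$ and factors the loss against the $X$-part, whereas you bound $g_k\le g(OPT)/\ell$ directly; both yield the same $(1-\epsilon)$ ratio, and the paper also remarks explicitly (which you use implicitly) that fixing some $x_{ti}$ to $0$ does not affect the applicability of Theorem~\ref{theo:frac}.
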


\begin{proof}
First, there are $O(n^\ell)$ choices for $X$; for each $X$ there are $2^\ell$ choices for each $X_t$, so in all there are $O(n^\ell 2^{\ell T})$ choices for $(X_1,\dots,X_T)$. For each choice $(X_1,\dots,X_t)$, we compute the reward of elements, and then apply $A^{LP}$. Since $\ell\leq \lceil T(T+1)^3/\epsilon\rceil$, the running time follows. 

Now let us show the claimed approximation ratio. Consider an optimal solution $S^*$, and suppose wlog that $g_i(S^*)$ are in non increasing order. Consider the iteration of the algorithm where $X=\{1,2,\dots,\ell\}$. At this iteration, consider the choice $(X_1,\dots,X_T)$ where $X_t$ is exactly the subset of objects in $X$ taken by $S^*$ at time $t$ (for $t=1,\dots,T$). The solution $S$ computed by the algorithm at this iteration (with $X$ and $(X_1,\dots,X_T)$) is $S=(X_1\cup Y_1,\dots,X_T\cup Y_T)$ where $(Y_1,\dots,Y_T)$ is the solution output by $A^{LP}$ on the subinstance of objects $Y=N\setminus X$, where $x_{ti}$ is set to 0 if $p_{ti}>g_k$.\\

Note that since we consider the iteration where $X$ corresponds to the $\ell$ objects of largest reward in the optimal solution, we do know that for an object $i>\ell$, if $p_{ti} > g_{k}$ then the optimal solution does not take object $i$ at time $t$ (it would have a reward greater than $g_k$), so we can safely fix this variable to 0. The idea behind putting these variables $x_{ti}$ to $0$ is to prevent the relaxed solution to take fractional objects with very large profits. These objects could indeed induce a very high loss when rounding the fractional solution to an integral one as done by $A^{LP}$. \\
By doing this, the number of fractional objects (i.e., objects in $F$) does not increase. Indeed, if we put a variable $x_{ti}$ at $0$, it is not fractional so nothing changes in the proof of Theorem~\ref{theo:frac}.

%Moreover, removing those objects does not affect the value of $\sum_{i \in Y} g_i(S^*)$, the $g_i(S^*)$ are in non decreasing order and we only put to $0$ the objects in $Y$ with $p_{i1}$ and/or $p_{i2}$ greater than $g_k$ and thus those objects can't be in the optimal solution in $Y$. Finally, $\sum_{i\in Y} g_i(S^r) \geq \sum_{i\in Y} g_i(S^*)$.\\ 

The value of $S^*$ is $g(S^*)=\sum_{i\in X} g_i(S^*)+\sum_{i\in Y} g_i(S^*)$. Thus, by equation~\ref{eqlpr}, we have:

\begin{eqnarray*}
g(S) & \geq & \sum_{i\in X} g_i(S^*)+\sum_{i\in Y} g_i(S^r)-\sum_{i\in F} g_i(S^r)\\
& \geq & \sum_{i\in X} g_i(S^*)+\sum_{i\in Y} g_i(S^*)-\sum_{i\in F} g_i(S^r),
\end{eqnarray*}

\noindent where $F$ is the set of fractional objects in the optimal fractional solution of the relaxation of the LP on $Y$, where $x_{ti}$ is set to 0 if $p_{ti}>g_k$.   \\

Each object of $F$ has a profit at most $g_k$ at any time steps and a transition profit at most $\sum_{t=1}^{T-1} B_{ti}$, so 
$$\forall i \in F \  g_i(S^r) \leq Tg_k + \sum_{t=1}^{T-1} B_{ti}$$ 
Now, note that in an optimal solution each object has a reward at least $\sum_{t=1}^{T-1} B_{ti}$ (otherwise simply never take this object), so for all $i \in F$  $g_k\geq g_i\geq  \sum_{t=1}^{T-1} B_{ti}$. So we have : $$\forall i \in F \  g_i(S^r) \leq  (T+1) g_k$$

Since $g_i(S^*)$ are in non increasing order, we have $g_k=g_{\ell}(S^*)  \leq \frac{\sum_{i\in X}g_i(S^*)}{\ell}$. So for all objects $i$ in $F$, $g_i(S^r) \leq \frac{(T+1)\sum_{i\in X}g_i(S^*)}{\ell}$. By Theorem~\ref{theo:frac}, there are at most $T^3$ of them, thus : 
\begin{eqnarray*}
g(S) & \geq &\sum_{i\in X} g_i(S^*)+\sum_{i\in Y} g_i(S^*)-\sum_{i\in F} g_i(S^r)\\
& \geq &\sum_{i\in X} g_i(S^*)+\sum_{i\in Y} g_i(S^*)-\frac{(T+1)T^3}{\ell}\sum_{i\in X}g_i(S^*)\\
& \geq & (1-\epsilon)\sum_{i\in X} g_i(S^*)+\sum_{i\in Y} g_i(S^*) \geq (1-\epsilon)g(S^*)
\end{eqnarray*}
\end{proof}

By Theorem~\ref{theo:ptasconstant}, for any fixed number of time steps $T$, {\sc Multistage  Knapsack} admits a PTAS.

%\subsubsection{Polynomial time approximation scheme}\label{sec:constant}

\subsection{Generalization to an arbitrary number of time steps}\label{sec:arbi}

%We now devise a PTAS for the general problem, for an arbitrary (instance dependent) number of steps. We actually show how to get such a PTAS provided that we have a PTAS for (any) constant number of time steps. 

%Let $A_{\epsilon,T_0}$ be an algorithm which, given an instance of {\sc Multistage  Knapsack} with at most $T_0$ time steps, outputs a $(1-\epsilon)$-approximate solution in time $O\left(n^{f(\epsilon,T_0)}\right)$ for some function $f$.

%The underlying idea is to compute (nearly) optimal solutions on subinstances of bounded sizes, and then to combine them in such a way that at most a small fraction of the optimal value is lost. 

We now devise a PTAS for the general problem, for an arbitrary (not constant) number
 of steps. We actually show how to get such a PTAS provided that we have a PTAS for (any)
constant number of time steps.
Let $A_{\epsilon,T_0}$ be an algorithm which, given an instance of {\sc Multistage Knapsack} with at
most $T_0$ time steps, outputs a $(1-\epsilon)$-approximate solution in time $O(n^{f(\epsilon,T_0)})$
 for some
function $f$.

 The underlying idea is to compute (nearly) optimal solutions on subinstances of bounded
 sizes, and then to combine them in such a way that at most a small fraction of the optimal
value is lost. 

 Let us first give a rough idea of our algorithm $PTAS_{MK}$.

Given an $\epsilon >0$, let $\epsilon'=\epsilon/2$ and $T_0 =\lceil \frac{1}{\epsilon'}\rceil$. We construct a set of solutions $S^1,\ldots ,S^{T_0}$ in the following way: 

In order to construct $S^1$, we partition the time horizon $1,\ldots, T$ into  $\lceil \frac{T}{T_0}\rceil$ consecutive intervals. Every such interval has length $T_0$, except possibly the last interval that may have a smaller length. We apply $A_{\epsilon,T_0}$ at every interval in this partition. $S^1$ is  then just the  concatenation of the partial solutions computed for each interval.

The partition on which it is based the construction of the solution $S^i$, $1<i \leq T_0$, is made in a similar way. The only difference is that the first interval of the partition of the time horizon $1,\ldots, T$ goes from time 1 to time $i-1$. For the remaining part of the time horizon, i.e. for $i, \ldots T$,  the partition is made as previously, i.e. starting at time step $i$, every interval will have a length of $T_0$, except possibly the last one, whose length may be smaller. Once the partition is operated,  we apply $A_{\epsilon,T_0}$ to every interval of the partition. $S^i$, $1<i \leq T_0$, is then defined as the concatenation of the partial solutions computed on each interval.
Among the $T_0$ solutions  $S^1,\ldots ,S^{T_0}$, the algorithm chooses the best solution.% that we denote by $S$.

%%%%%%%%\textbf{}

The construction is illustrated on Figure 1, with 10 time steps and $T_0 = 3$. The first
solution $S^1$ is built by applying 4 times $A_{\epsilon,T_0}$, on the subinstances corresponding to time
steps $\{1, 2, 3\}, \{4, 5, 6\}, \{7, 8, 9\}$, and $\{10\}$. The 
solution $S^2$ is built by applying 4 times $A_{\epsilon,T_0}$, on the subinstances corresponding to time
steps $\{1\}, \{2, 3, 4\}, \{5, 6, 7\}$, and $\{8,9, 10\}$.

\begin{figure}[ht]
\begin{center}
\includegraphics[scale=0.5]{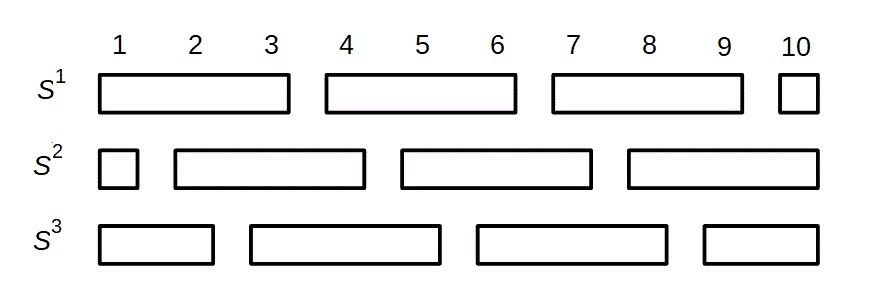}
\caption{The three solutions for $T_0=3$ and $T=10$.}\label{fig:solutions}
\end{center}
\end{figure}

More formally, given an $\epsilon>0$, the algorithm $PTAS_{MK}$ works as follows.
\begin{itemize}
\item Let $\epsilon'=\epsilon/2$ and $T_0=\left\lceil \frac{1}{\epsilon'}\right\rceil$. Let ${\cal I}_t=[t,\dots, t+T_0-1] \cap [1, \dots, T]$ be the  set of (at most) $T_0$ consecutive time steps starting at $t$. We consider ${\cal I}_t$ for any $t$ for which it is non empty (so $t\in [-T_0+2 \dots T]$).%. Divide the $T$ time steps into $\frac{T}{T_0}$ time intervals \footnote{For ease of notation we assume that $T$ is a multiple of $T_0$. Otherwise, we can add at most $T_0$ dummy steps with both profits and bonuses equal to 0.} $I_t$ corresponding to the interval $[t, t+1, \ldots, t+T_0-1]$.
\item For $t \in \{1, \ldots, T_0 \}$:
\begin{itemize}
    \item Apply $A_{\epsilon',T_0}$ on all intervals ${\cal I}_{t'}$ with $t'\equiv t \mod T_0$.
    %${\cal I}_{t+kT_0}$. 
    Note that each time step belongs to exactly one of such intervals. %, I_{t+T_0},I_{t+2T_0},\ldots, I_{t+T-T_0}$
    \item Define the solution $S^t$ built from the partial solutions given by the applications of $A_{\epsilon'\textbf{},T_0}$.
\end{itemize}
\item Choose the best solution $S$ among the $T_0$ solutions $S^1,\dots,S^{T_0}$.
\end{itemize} 

%The construction is illustrated on Figure~\ref{fig:solutions}, with 10 time steps and $T_0=3$. The first solutions $S^1$ is built by applying  4 times $A_{\epsilon,T_0}$, on the subinstances corresponding to time steps $\{1,2,3\}$, $\{4,5,6\}$, $\{7,8,9\}$, and $\{10\}$.

\begin{theorem}\label{thptasgen}
The algorithm $PTAS_{MK}$ is a polynomial time approximation algorithm.
\end{theorem}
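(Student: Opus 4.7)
The plan is to charge the loss of the concatenation algorithm to the transition profits of an optimal solution $S^*$ at the interval boundaries, and then exploit the fact that the $T_0$ different partitions cover each transition exactly once.

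Fix $t \in \{1, \dots, T_0\}$ and consider the partition of $[1,T]$ used to build $S^t$. For each interval $\mathcal{I}_{t'}$ of this partition, the restriction of $S^*$ to $\mathcal{I}_{t'}$ is a feasible solution of the subinstance on $\mathcal{I}_{t'}$, and its value equals the knapsack profits of $S^*$ on the time steps of $\mathcal{I}_{t'}$ plus the transition profits of $S^*$ between consecutive time steps both lying inside $\mathcal{I}_{t'}$. Since $A_{\epsilon',T_0}$ is a $(1-\epsilon')$-approximation, the partial solution it returns has value at least $(1-\epsilon')$ times this restricted value. Summing over all intervals of the partition, the knapsack profits of $S^*$ are recovered over all of $\{1,\dots,T\}$, and the transition profits of $S^*$ are recovered for every transition \emph{except} those crossing a boundary of the partition. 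Denoting by $B_t$ the sum of transition profits of $S^*$ across the boundaries of the partition used for $S^t$, this gives
\[ g(S^t) \;\ge\; (1-\epsilon')\bigl(g(S^*) - B_t\bigr). \]

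The key combinatorial observation is that the boundaries of partition $t$ are precisely the transitions $(q,q+1)$ with $1\le q\le T-1$ and $q+1\equiv t \pmod{T_0}$; hence every transition is a boundary of exactly one partition among $S^1,\dots,S^{T_0}$. Consequently $\sum_{t=1}^{T_0} B_t$ equals the total transition profit of $S^*$, which is at most $g(S^*)$. By averaging, some index $t_0$ satisfies $B_{t_0}\le g(S^*)/T_0\le \epsilon' g(S^*)$, since $T_0\ge 1/\epsilon'$. The best of the $T_0$ candidates therefore obeys
\[ g(S) \;\ge\; g(S^{t_0}) \;\ge\; (1-\epsilon')(1-\epsilon')\, g(S^*) \;\ge\; (1-2\epsilon')\, g(S^*) \;=\; (1-\epsilon)\, g(S^*). \]

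For the running time, there are $T_0$ outer iterations and each invokes $A_{\epsilon',T_0}$ on at most $\lceil T/T_0\rceil+1$ subinstances, each with at most $T_0$ time steps. Since $T_0$ and $\epsilon'$ depend only on $\epsilon$, every invocation runs in $O(n^{f(\epsilon',T_0)})$ time, polynomial in $n$. Instantiating $A_{\epsilon',T_0}$ with the algorithm of Theorem~\ref{theo:ptasconstant} yields a $(1-\epsilon)$-approximation in polynomial time.

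The main step to justify carefully is the very first inequality: that summing the per-interval outputs of $A_{\epsilon',T_0}$ dominates, up to a factor $(1-\epsilon')$, the sum of knapsack profits and internal transition profits of $S^*$. This uses the fact that the restriction of $S^*$ to a subinterval is a feasible solution for that subinstance, hence a lower bound on its optimum, together with the observation that when the partial solutions are concatenated into $S^t$, the boundary transition profits realized by $S^t$ are nonnegative and are conservatively discarded in the bound above, so only the internal transitions need to be accounted for on the right-hand side.
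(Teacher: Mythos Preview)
Your proof is correct and follows essentially the same approach as the paper: charge the loss to the boundary transitions of $S^*$, observe that each transition $(q,q{+}1)$ is a boundary of exactly one of the $T_0$ partitions, and average. The only cosmetic differences are that you write the per-partition bound as $g(S^t)\ge(1-\epsilon')(g(S^*)-B_t)$ and then pick a single good index by averaging, whereas the paper writes $g(S^t)\ge(1-\epsilon')g(S^*)-loss(S^t)$ and sums over all $t$ before dividing by $T_0$; both routes yield $(1-2\epsilon')g(S^*)$.
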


\begin{proof}
The algorithm calls the $A_{\epsilon',T_0}$ algorithm $\lceil T/T_0 \rceil$ times for each of the $T_0$ generated solutions. Yet, the running time of  $A_{\epsilon',T_0}$ is  $n^{f(\frac{1}{\epsilon'},T_0)}=n^{f(\frac{2}{\epsilon},\lceil \frac{2}{\epsilon}\rceil)}$, i.e a polynomial time for any fixed $\epsilon$.  So, the running time of the algorithm for any $T$ is\\ $T_0\times \left \lceil \frac{T}{T_0} \right \rceil n^{f(\frac{2}{\epsilon},\frac{2}{\epsilon})} = O(T n^{f(\frac{2}{\epsilon},\frac{2}{\epsilon})})$, a polynomial time for any fixed $\epsilon$. \\

Each solution $S^t$ of the $T_0$ generated solutions may loose some bonus between the last time step of one of its intervals ${\cal I}_{t+kT_0}$ and the first time step of its next interval ${\cal I}_{t+(k+1)T_0}$ (in Figure~\ref{fig:solutions} for instance, $S^1$ misses the bonuses between steps 3 and 4, 6 and 7, and 9 and 10). Let $loss(S^t)$ be this loss with respect to some optimal solution $S^*$. Since we apply $A_{\epsilon',T_0}$ to build solutions, we get that the value $g(S^t)$ of $S^t$ is such that:

$$g(S^t)\geq (1-\epsilon') g(S^*) - loss(S^t)$$

Since the output solution $S$ is the best among the solutions $S^t$, by summing up the previous inequality for $t=1,\dots,T_0$ we get:

$$g(S)\geq (1-\epsilon') g(S^*)-\frac{\sum_{t=1}^T loss(S^t)}{T_0}$$

Now, by construction the bonus between steps $j$ and $j+1$ appears in the loss of  exactly one $S^t$ (see Figure~\ref{fig:solutions}). So the total loss of the $T_0$ solutions is the global transition bonus of $S^*$, so at most $g(S^*)$. Hence:

\begin{eqnarray*}
g(S) &\geq &  (1-\epsilon')g(S^*)-\frac{g(S^*)}{T_0} \\
&\geq&(1-2\epsilon')g(S^*) = (1-\epsilon) g(S^*)
\end{eqnarray*}

\end{proof}

%Due to space constraints, the proof of Theorem~\ref{thptasgen} is deferred to Appendix (section~\ref{appthptasgen}). The underlying idea is to upper bound the loss of the algorithm which corresponds to transition costs between time steps $aT_0$ and $aT_0+1$, for $1\leq a\leq q-1$. 

\section{Pseudo-polynomiality and hardness results}

We complement the previous result on approximation scheme by showing the following results for {\sc Multistage  Knapsack}:
\begin{itemize}
\item First, it does not admit an FPTAS (unless $P=NP$), even if there are only two times steps (Section~\ref{sec:nofptas}) and the bonus is uniform ($B_{ti}=B$ for all $i$, all $t$);
\item Second, the problem is pseudo-polynomial if the number of time steps $T$ is a fixed constant (Section~\ref{sec:pseudo}) but is strongly $NP$-hard in the general case even in the case of uniform bonus (Section~\ref{sec:strong}). 
\end{itemize}

%In this section, the proof are made in an easier setting where the bonus is a fixed constant called $B$.  

\subsection{No FPTAS}\label{sec:nofptas}

\begin{theorem}\label{thnofptas}
There is no $FPTAS$ for {\sc Multistage Knapsack} unless $P=NP$, even if there are only two time steps and the bonus is uniform.
\end{theorem}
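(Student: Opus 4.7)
The plan is to rule out an FPTAS by the standard gap-rounding trick: reduce from the NP-complete problem $\text{CARDINALITY}(2\text{-}KP)$ of Theorem~\ref{th:dimkp} to an instance of Multistage Knapsack with $T=2$ and uniform bonus whose optimum is (i) polynomially bounded and (ii) encodes the 2-KP answer. An FPTAS called with inverse-polynomial precision would then recover the exact optimum, solving an NP-complete problem in polynomial time.

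Given an instance of $\text{CARDINALITY}(2\text{-}KP)$ on $n$ objects with weights $w_{1i}, w_{2i}$, capacities $C_1, C_2$, and target $K$, I will construct a two-stage Multistage Knapsack instance that reuses the same objects, weights and capacities, sets every knapsack profit to $p_{1i}=p_{2i}=1$, and uses a uniform bonus $B := 2n+1$. The first key step is to verify that every optimal solution satisfies $S_1 = S_2$. Indeed, any solution in which $S_1$ and $S_2$ differ on $k \geq 1$ objects has value at most $|S_1|+|S_2|+(n-k)B \leq 2n + (n-1)B$, whereas setting $S_1=S_2=S^*$, where $S^*$ is a maximum-cardinality subset satisfying both capacity constraints, achieves $2|S^*| + nB$; with $B = 2n+1$ the inequality $2|S^*| + nB > 2n + (n-1)B$ reduces to $B > 2n - 2|S^*|$, which always holds. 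Thus the optimum value is exactly $\mathrm{OPT} = 2K^* + nB$, where $K^* = |S^*|$.

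The second key step is the gap argument. Note that $\mathrm{OPT} \leq 2n + n(2n+1) = 2n^2 + 3n$, so $\mathrm{OPT}$ is polynomial in $n$, and every feasible value is an integer. Assuming for contradiction that Multistage Knapsack admits an FPTAS $\mathcal{A}_\epsilon$ running in time polynomial in $n$ and $1/\epsilon$, I will invoke $\mathcal{A}_\epsilon$ with $\epsilon := 1/(2n^2+3n+1)$, which is inverse polynomial in $n$ and therefore keeps the running time polynomial. The returned integral value $V$ satisfies $V \geq (1-\epsilon)\mathrm{OPT} > \mathrm{OPT} - 1$, hence $V = \mathrm{OPT}$. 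Extracting $K^* = (V - nB)/2$ then decides the original $\text{CARDINALITY}(2\text{-}KP)$ instance (the answer is \emph{yes} iff $K^* \geq K$) in polynomial time, yielding $P=NP$.

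The main obstacle is balancing two competing requirements on $B$: it must be small enough (polynomial in $n$) so that $\mathrm{OPT}$ stays polynomially bounded, which is what makes the FPTAS time polynomial at the chosen precision; yet it must be large enough that the transition bonus strictly dominates any knapsack-profit gain achievable by diverging the two stages. The choice $B = 2n+1$ satisfies both constraints simultaneously, and once this delicate point is fixed the rest of the argument is a routine gap-rounding calculation.
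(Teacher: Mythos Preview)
Your proof is correct and follows essentially the same reduction from $\text{CARDINALITY}(2\text{-}KP)$ as the paper: same two-stage instance with unit profits and uniform bonus, same gap-rounding argument against an FPTAS. The only difference is cosmetic: the paper takes $B=2$ and uses a local swap (if $x_{1i}\neq x_{2i}$, set both to $0$, losing at most $1$ in profit but gaining $2$ in bonus) to force $S_1=S_2$, whereas your global comparison needs the larger $B=2n+1$; both choices keep $\mathrm{OPT}$ polynomial, so the conclusion is unaffected.
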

\begin{proof}
We prove the result by a reduction from $\text{CARDINALITY}(2-KP)$, known to be $NP$-complete (Theorem~ \ref{th:dimkp}) 

For an instance $I$ of $\text{CARDINALITY}(2-KP)$, we consider the following instance $I'$ of {\sc Multistage  Knapsack}: \\
\begin{itemize}
\item There are $T=2$ time steps, and the same set of objects $N=\{1,2,...,n\}$ as in $I$.
\item The weights $w_{1i}$ and $w_{2i}$ are the same as in $I$, for all $i \in N$.
\item $p_{1i}=p_{2i}=1$ for all $i \in N$.
\item $B_{1i}=2$ for all $i\in N$.
\end{itemize}

We show that the answer to $I$ is Yes (we can find $K$ objects fulfilling the 2 knapsack constraints) if and only if there is a solution of value at least $2K+2n$ for the instance $I'$ of $MK$.

If we have a set $A$ of $K$ objects for $I$, then we simply take these objects at both time steps in $I'$. This is feasible, the knapsack revenue is $2K$ and the transition revenue is $2n$.

Conversely, consider an optimal solution of value at least $2K+2n$ in $I'$. In an optimal solution, we necessarily have $x_{1j}=x_{2j}$, i.e an object is taken at the time $1$ if and only if it is taken at the time $2$. Indeed, assume that there is one $i \in N$ such that $x_{1i} \neq x_{2i}$, then consider the solution where $x_{1i} = x_{2i}=0$. This is still feasible; the knapsack revenue reduces by 1 but the transition revenue increases by $2$, contradiction.
Thus the same set of objects $A$ is taken at both time steps. Since the value of the solution is at least $2K+2n$, the size of $A$ is at least $K$. Hence the answer to $I$ is Yes.

Note that in $I'$ the optimal value is at most $4n$. We produce in this reduction polynomially bounded instances of {\sc Multistage  Knapsack} (with only two time steps), so the problem does not admit an FPTAS. Indeed, suppose that there is an FPTAS producing a $(1-\epsilon)$-approximation in time $p(1/\epsilon,n)$, for some polynomial $p$. Let $\epsilon = \frac{1}{4n+1}$. If we apply the FPTAS with this value of $\epsilon$ on $I'$ we get a solution $S$ of value at least  $(1-\epsilon)OPT(I')\geq OPT(I') - \frac{OPT(I')}{4n+1} > OPT(I')-1$. Yet all the possible values are integers so $S$ is optimal. The running time is polynomial in $n$, impossible unless $P=NP$.
\end{proof}

\subsection{Pseudo-polynomiality for a constant number of time steps}\label{sec:pseudo}

We show here that the pseudo-polynomiality of the {\sc Knapsack} problem generalizes to {\sc Multistage  Knapsack} when the number of time steps is constant. More precisely, with a standard dynamic programming procedure, we have the following.

\begin{theorem}\label{thpseudo}
{\sc Multistage  Knapsack} is solvable in time $O(T(2C_{max}+2)^Tn)$ where $C_{max}=\max\{C_i,i=1,\dots,T\}$.
\end{theorem}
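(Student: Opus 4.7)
The plan is a direct dynamic programming generalization of the standard pseudopolynomial algorithm for {\sc Knapsack}. Since objects are independent of one another in the transition profit (the bonus $B_{ti}$ depends only on whether object $i$'s decision is preserved between $t$ and $t+1$), it suffices to process objects one by one and, for each object, decide the full pattern of time steps at which to take it. The state must, however, track the capacity consumed at \emph{every} time step simultaneously, which is where the $(C_{max}+1)^T$ factor will come from.

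Formally, I would define $f(i, c_1,\ldots, c_T)$ as the maximum total value (knapsack profit plus transition profit) attainable using only objects $1,\ldots,i$ while consuming exactly $c_t$ units of capacity at time $t$, where $0\le c_t\le C_t$. The base case is $f(0, 0,\ldots,0)=0$ and $-\infty$ elsewhere. For the recurrence, at step $i$ I enumerate all subsets $R\subseteq\{1,\ldots,T\}$ representing the time steps at which object $i$ is taken; each such $R$ contributes $\sum_{t\in R}p_{ti}$ in knapsack profit and $\sum_{t=1}^{T-1}B_{ti}\cdot\mathbb{1}[\mathbb{1}[t\in R]=\mathbb{1}[t+1\in R]]$ in transition profit, which can be computed in $O(T)$ time from $R$ alone. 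Then
\[
f(i,c_1,\ldots,c_T)=\max_{R}\Bigl\{f\bigl(i-1,\,c_1-w_{1i}\mathbb{1}[1\in R],\ldots,c_T-w_{Ti}\mathbb{1}[T\in R]\bigr)+\text{val}_i(R)\Bigr\},
\]
where the maximum is over subsets $R$ for which $w_{ti}\mathbb{1}[t\in R]\le c_t$ for every $t$. The optimum value of the instance is $\max\{f(n,c_1,\ldots,c_T):0\le c_t\le C_t\}$, and the associated solution can be recovered in the usual way by backtracking.

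For the complexity, the number of states is $n\cdot\prod_{t=1}^{T}(C_t+1)\le n(C_{max}+1)^T$. At each state there are $2^T$ subsets $R$ to examine, and each subset requires $O(T)$ work to evaluate $\text{val}_i(R)$ and the corresponding predecessor state. The total running time is therefore
\[
O\bigl(n\cdot T\cdot 2^T\cdot(C_{max}+1)^T\bigr)=O\bigl(Tn(2C_{max}+2)^T\bigr),
\]
matching the stated bound.

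I do not expect any real obstacle: the only point that needs care is observing that the transition bonus for object $i$ is a function of $i$'s own decision pattern (not of earlier objects), so no additional state beyond the capacity vector is required; the dependence on $T$ in the exponent is inherent since the state space must record one capacity per time step.
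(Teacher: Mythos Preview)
Your proposal is correct and follows essentially the same approach as the paper: process objects one by one, enumerate all $2^T$ decision patterns for the current object, and maintain a DP state indexed by the capacity vector, yielding the same $O(Tn(2C_{max}+2)^T)$ bound. The only cosmetic difference is that the paper uses ``weight at most $c_t$'' in the state definition (so the base case is $0$ everywhere and the answer is a single table entry), whereas you use ``exactly $c_t$'' (so the base case is $-\infty$ off the origin and the answer is a max over the table); both formulations are standard and equivalent.
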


\begin{proof}

For any $T$-uple $(c_1,\dots,c_T)$ where $0\leq c_i\leq C_i$, and any $s\in \{0,\dots,n\}$, we define $\alpha(c_1,\dots,c_T,s)$ to be the best value of a solution $S=(S_1,\dots,S_T)$ such that:
\begin{itemize}
	\item The weight of knapsack at time $t$ is at most $c_t$: for any $t$, $\sum_{i\in S_t}w_{ti}\leq c_i$;
	\item The solution uses only objects among the first $s$: for any $t$, $S_t\subseteq \{1,\dots,s\}$.
\end{itemize}
The  optimal value of {\sc Multistage  Knapsack} is then $\alpha(C_1,\dots,C_T,n)$. We compute $\alpha$ by increasing values of $s$. For $s=0$, we cannot take any object so $\alpha(c_1,\dots,c_T,0)=0$. 

Take now $s\geq 1$. To compute $\alpha(c_1,\dots,c_T,s)$, we simply consider all the $2^T$ possibilities for taking or not object $s$ in the $T$ time steps. Let $A\subseteq \{1,\dots,T\}$ be a subset of time steps. If we take object $s$ at time steps in $A$ (and only there), we first check if $A$ is a {\it valid} choice, i.e., $w_{ts}\leq c_t$ for any $t\in A$; then we can compute in $O(T)$ the corresponding reward $r_s(A)$ ($\sum_{t\in A}p_{ts}$ plus the transition bonus). We have:

$$\alpha(c_1,\dots,c_T,s)=$$
$$\max\{r_s(A)+\alpha(c_1-w_{1s},\dots,c_T-w_{Ts},s-1):A\subseteq\{1,\dots,T\} \mbox{ valid}\}$$

The running time to compute one value of $\alpha$ is $O(T2^T)$. There are $O(n\Pi_{t=1}^T(C_i+1))=O(n(C_{max}+1)^T)$ values to compute, so the running time follows. A standard backward procedure allows to recovering the solution.
\end{proof}

%Due to space constraints, the proof of Theorem~\ref{thpseudo} is deferred to the Appendix (Section~\ref{appthpseudo}).

\subsection{Strongly $NP$-hardness}\label{sec:strong}

\begin{defn}
{\sc Binary Multistage  Knapsack}  is the sub-problem of the Multistage {\sc Knapsack} where all the weights, profits and capacities are all equal to 0 or $1$.
\end{defn}

For the usual {\sc Knapsack} problem, the binary case corresponds to a trivial problem. For the multistage case, we have the following: 

\begin{theorem}
{\sc Binary Multistage  Knapsack} is $NP$-hard, even in the case of uniform bonus.
\end{theorem}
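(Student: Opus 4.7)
The plan is to reduce from the {\sc Maximum Independent Set} problem in its decision form (given a graph $G=(V,E)$ and an integer $k$, does $G$ admit an independent set of size at least $k$?), which is a classical strongly $NP$-hard problem. The goal is to construct a Binary Multistage Knapsack instance whose optimum encodes the independence number of $G$, while keeping all weights, profits and capacities in $\{0,1\}$ and using a single uniform bonus value.

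Given $(G,k)$ with $|V|=n$ and $|E|=m$, I would build an instance with $T=m$ time steps (one per edge) and $N=V$ objects. For the time step $t$ associated with edge $e_t=\{u,v\}$, I would set $w_{tu}=w_{tv}=1$ and $w_{ti}=0$ for $i\notin\{u,v\}$, capacity $C_t=1$, and all profits $p_{ti}=1$; the uniform bonus is fixed at $B_{ti}=B:=Tn$. The capacity constraint at time $t$ then says exactly that at most one endpoint of $e_t$ belongs to $S_t$, so a \emph{constant} solution $S_t\equiv S$ is feasible iff $S$ is an independent set of $G$, and its value is $T|S|+(T-1)nB$.

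I would next bound non-constant solutions. Any solution with $S_{t}\neq S_{t+1}$ for some $t$ has transition bonus at most $((T-1)n-1)B$, and its knapsack profit is at most $T(n-1)$ since at every time step one endpoint of $e_t$ must be excluded from $S_t$. So every non-constant solution has value at most $T(n-1)+((T-1)n-1)B$. With $B=Tn$, this is strictly smaller than $T|S|+(T-1)nB$ whenever $|S|\ge 1$, so the optimum is attained by a constant solution, and it equals $Tk^*+(T-1)nB$, where $k^*$ is the independence number of $G$. Therefore the instance has value at least $Tk+(T-1)nB$ iff $G$ has an independent set of size at least $k$.

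The reduction is polynomial and produces binary weights, profits and capacities together with a uniform bonus, which is exactly what the statement requires. The main technical step is the loss analysis for non-constant solutions; concretely one needs $B\ge T(n-1-k^*)$, which is comfortably covered by $B=Tn$ whenever $G$ has at least one vertex. I expect no other obstacle.
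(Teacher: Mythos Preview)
Your proposal is correct and follows essentially the same reduction as the paper: both reduce from {\sc Independent Set}, take one object per vertex, one time step per edge, give the two endpoints weight~1 (others weight~0), capacity~1, all profits~1, and a large uniform bonus to force the optimal solution to be constant over time. The only cosmetic differences are your choice $B=Tn$ versus the paper's $B=2nm$, and your slightly tighter knapsack-profit bound $T(n-1)$ versus the paper's $mn$; both analyses lead to the same conclusion.
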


\begin{proof}

We prove the result by a reduction from the {\sc Independent Set} problem where, given a graph $G$ and an integer $K$, we are asked if there exists a subset of $K$ pairwise non adjacent vertices (called an independent set). This problem is well known to be $NP$-hard, see \cite{gj}.

Let $(G,K)$ be an instance of the  {\sc Independent Set} problem, with $G=(V,E)$, $V=\{v_1,\dots,v_n\}$ and $E=\{e_1,\dots,e_m\}$. We build the following instance $I'$ of {\sc Binary Multistage  Knapsack}: 
\begin{itemize}
\item There are $n$ objects $\{1,2\dots,n\}$, one object per vertex;
\item There are $T=m$ time steps: each edge $(v_i,v_j)$ in $E$ corresponds to one time step;
\item at the time step corresponding to edge $(v_i,v_j)$: objects $i$ and $j$ have weight 1, while the others have weight 0, all objects have profit 1, and the capacity constraint is 1.
\item The transition profit is $b_{ti}=B=2nm$ for all $i,t$.
\end{itemize}

We claim that there is an independent set of size (at least) $K$ if and only if there is a solution for {\sc Binary Multistage  Knapsack} of value (at least) $n(m-1)B+mK$. 

Suppose first that there is an independent set $V'$ of size at least $K$. We take the $K$ objects corresponding to $V'$ at all time steps. This is feasible since we cannot take 2 objects corresponding to one edge. The built solution sequence has knapsack profit $mK$ and transition profit $n(m-1)B$ (no modification). 

Conversely, take a solution of {\sc Binary Multistage  Knapsack} of value at least $n(m-1)B+mK$. Since $B=2nm$, there must be no modification of the knapsack over the time. Indeed, otherwise, the transition profit would be at most $n(m-1)B-B$, while the knapsack profit is at most $mn$, so the value would be less than $n(m-1)B$. So we consider the set of vertices corresponding to this knapsack. Thanks to the value of the knapsack, it has size at least $K$. Thanks to the capacity constraints of the bags, this is an independent set. 
\end{proof}
Since $B$ is polynomially bounded in the proof, this shows that {\sc Multistage  Knapsack} is strongly $NP$-hard.
\section{Conclusion}
We considered the {\sc Multistage Knapsack} problem in the offline setting and we studied the impact of the number of time steps in the complexity of the problem. It would be interesting to continue this work in the online and the semi-online settings in order to measure the impact of the knowledge of the future on the quality of the solution.
%\section{Sum up of our results for the $T-DynKP$}

%%
%% Bibliography
%%

%% Please use bibtex, 

%\newpage

\bibliography{biblio}

%\newpage 

\end{document}